\newcolumntype{+}{!{\vrule width 2pt}}
\newlength\savedwidth
\renewcommand{\@biblabel}[1]{\quad#1.}
\DeclareMathOperator{\spn}{span}
\DeclareMathOperator{\rank}{rank}
\newtheorem{definition}{\sffamily\bfseries Definition}
\newtheorem{proposition}{\sffamily\bfseries Proposition}
\newtheorem{lemma}{\sffamily\bfseries Lemma}
\begin{document}
\vspace*{0.2in}

\begin{flushleft}
{\Large
\textbf\newline{Linear algebraic structure of zero-determinant strategies in repeated games} 
}
\newline
\\
Masahiko Ueda\textsuperscript{1*},
Toshiyuki Tanaka\textsuperscript{1}
\\
\bigskip
\textbf{1} Department of Systems Science, Graduate School of Informatics, Kyoto University, Kyoto 606-8501, Japan
\\
\bigskip

%
%





* ueda.masahiko.5r@kyoto-u.ac.jp

\end{flushleft}
\section*{Abstract}
Zero-determinant (ZD) strategies, a recently found novel class of strategies in repeated games, has attracted much attention in evolutionary game theory. 
A ZD strategy unilaterally enforces a linear relation between average payoffs of players.
Although existence and evolutional stability of ZD strategies have been studied in simple games, their mathematical properties have not been well-known yet.
For example, what happens when more than one players employ ZD strategies have not been clarified.
In this paper, we provide a general framework for investigating situations where more than one players employ ZD strategies in terms of linear algebra.
First, we theoretically prove that a set of linear relations of average payoffs enforced by ZD strategies always has solutions, which implies that incompatible linear relations are impossible.
Second, we prove that linear payoff relations are independent of each other under some conditions.
These results hold for general games with public monitoring including perfect-monitoring games.
Furthermore, we provide a simple example of a two-player game in which one player can simultaneously enforce two linear relations, that is, simultaneously control her and her opponent's average payoffs.
All of these results elucidate general mathematical properties of ZD strategies.



\section*{Introduction}
Game theory is a powerful framework explaining rational behaviors of human beings~\cite{FudTir1991} and evolutionary behaviors of biological systems~\cite{SmiPri1973,Now2006}. 
In a simple example of prisoner's dilemma game, mutual defection is realized as a result of rational thought, even if mutual cooperation is more favorable.
On the other hand, when the game is repeated infinite times, cooperation can be realized if players are far-sighted, which is confirmed as folk theorem.
Axelrod's famous tournaments on infinitely repeated prisoner's dilemma game~\cite{AxeHam1981,Axe1984} also showed that cooperative but retaliating strategy, called the tit-for-tat strategy, is successful in the setting of infinitely repeated game.

Recently, in repeated games with perfect monitoring, a novel class of strategies, called zero-determinant (ZD) strategy, was discovered~\cite{PreDys2012}.
Surprisingly, ZD strategy unilaterally enforces a linear relation between average payoffs of players.
A strategy which unilaterally sets her opponent's average payoff (equalizer strategy) is one example.
Another example is extortionate strategy in which the player can earn more average payoff than her opponent.
ZD strategies contain the well-known tit-for-tat strategy as a special example.
After the pioneering work of Press and Dyson, stability of ZD strategies has been studied in the context of evolutionary game theory~\cite{HNS2013,AdaHin2013,StePlo2013,HNT2013,StePlo2012,SzoPer2014}, and it was found that some kind of ZD strategies, called generous ZD strategies, can stably exist.
Performance of ZD strategies has also been studied in human experiments~\cite{HRM2016,WZLZX2016}.
Although ZD strategy was originally formulated in two-player two-action (iterated prisoner's dilemma) games, ZD strategy was extended to multi-player two-action (iterated social dilemma) games~\cite{HWTN2014,PHRT2015}, two-player multi-action games~\cite{Guo2014,McAHau2016}, and multi-player multi-action games~\cite{HDND2016}.
In addition, ZD strategy was extended to two-player two-action noisy games~\cite{HRZ2015,MamIch2019}, which is one example of the repeated games with imperfect monitoring.
Furthermore, besides these fundamental theoretical studies, ZD strategies are also applied to resource sharing in wireless networks~\cite{DKL2014,ZNSJH2016}.
See Ref.~\cite{HCN2018} for a review of ZD strategies in the context of direct reciprocity.

The contributions of this paper are four-fold. 
First, we extend ZD strategy for general multi-player multi-action repeated games with public monitoring, where players know the structure of games (players, sets of actions of all players, and payoffs of all players) but cannot observe actions of other players.
A typical example of such situation is auction.
In a sealed-bid auction, a player cannot know actions (bids) of other players, but only knows the result of the game (whether she is the winner or not).
Second, we prove, in terms of a linear-algebraic argument,
that linear payoff relations enforced by players with ZD strategies are consistent, that is, always have solutions.
Third, we introduce the notion of independence of ZD strategies,
and prove, again in terms of a linear-algebraic argument,
that linear payoff relations enforced by players with ZD strategies
are independent under a general condition. 
Fourth, as an application of linear algebraic formulation, we provide a simple example of a two-player game in which one player can simultaneously enforce two linear relations.
This means that she can simultaneously control her and her opponent's average payoffs, which has never been reported in the context of ZD strategies.
All of these results develop deeper understanding of mathematical properties of ZD strategies in general games.

We remark on discounting.
In standard repeated games, discounting of future payoffs is considered by introducing a discounting factor $\delta\leq 1$~\cite{FudTir1991}.
In the original work on ZD strategy by Press and Dyson, only the case without discounting (i.e., $\delta=1$) was investigated~\cite{PreDys2012}.
After their work, ZD strategy was extended to $\delta<1$ case~\cite{HTS2015,McAHau2016,IchMas2018}.
In this paper, we consider only the non-discounting case $\delta=1$.

\section*{Setup}
We consider an $N$-player multi-action repeated game,
in which player $n\in\{1,\cdots,N\}$ has $M_n$ possible actions,
where $M_n$ is a positive integer. 
Let $\bm{\sigma}\equiv (\sigma_1,\cdots,\sigma_N)
\in\Sigma\equiv \prod_{n=1}^N\{1,\cdots,M_n\}$ denote
a state of the game, which is the combination
of the actions taken by the $N$ players.
Let $M\equiv \prod_{n=1}^NM_n$ be the size of the state space $\Sigma$. 
We assume that player $n$ decides the next action
stochastically according to her own previous action $\sigma^\prime_n$ and common information $\tau \in B$
with the conditional probability $\hat{T}_n\left( \sigma_n | \sigma^\prime_n, \tau \right)$, where $B$ is some set.
We also define the conditional probability that common information $\tau$ arises when actions of players in the preceding round are $\bm{\sigma}'$ by $W\left( \tau | \bm{\sigma}' \right)$.
(An example of $\tau$ is the winner in each round; see \nameref{S1_Text}.)
Then the sequence of states of the repeated game
forms a Markov chain
\begin{equation}
 P\left( \bm{\sigma}, t+1 \right) = \sum_{\bm{\sigma}'} T\left( \bm{\sigma} | \bm{\sigma}' \right) P\left( \bm{\sigma}', t \right)
 \label{eq:mc}
\end{equation}
with the transition probability
\begin{equation}
 T\left( \bm{\sigma} | \bm{\sigma}' \right) \equiv \sum_\tau W\left( \tau | \bm{\sigma}' \right) \prod_{n=1}^N \hat{T}_n\left( \sigma_n | \sigma^\prime_n, \tau \right),
\end{equation}
where $P(\bm{\sigma},t)$ denotes the state distribution
at time $t$. 
We assume that all players know the function $W\left( \tau | \bm{\sigma}' \right)$ but cannot directly observe $\bm{\sigma}'$.
When $B=\Sigma$ and $W\left( \tau | \bm{\sigma}' \right)=\delta_{\tau, \bm{\sigma}'}$,
the above formulation reduces to that of perfect monitoring games.
Otherwise, it represents games with public monitoring, where players cannot directly observe actions of other players.
The model treated here can therefore be regarded as
an extension of repeated games with perfect monitoring
to those with imperfect monitoring,
and the extension includes the former as a special case. 

For each state $\bm{\sigma}$, a payoff of player $n$ is defined as $s_n\left( \bm{\sigma} \right)$.
Let $\bm{s}_n\equiv (s_n(\bm{\sigma}^\prime))_{\bm{\sigma}^\prime\in\Sigma}$ be
the $M$-dimensional vector representing the payoffs
of player $n$, which we call the payoff vector of player $n$.
It should be noted that in the following analysis
we do not assume the payoffs to be symmetric,
unless otherwise stated.

\section*{Results}
\subsection*{Zero-determinant strategies}
Because a discounting factor $\delta$ is one, the payoffs of players are the average payoffs with respect to the stationary distribution of the Markov chain.
Let $P^{(\mathrm{s})}\left( \bm{\sigma} \right)$ denote
the stationary distribution, which may depend on the initial condition when the Markov chain is not irreducible.
It satisfies
\begin{equation}
 P^{(\mathrm{s})}\left( \bm{\sigma} \right) = \sum_{\bm{\sigma}'} T\left( \bm{\sigma} | \bm{\sigma}' \right) P^{(\mathrm{s})}\left( \bm{\sigma}' \right).
 \label{eq:steady}
\end{equation}
Taking summation of both sides of Eq.~\eqref{eq:steady}
with respect to $\bm{\sigma}_{-n}\equiv\bm{\sigma}\backslash\sigma_n$
with an arbitrary $n$, we obtain 
\begin{equation}
  \label{eq:Akin}
 0 = \sum_{\bm{\sigma}'} \left[ T_n\left( \sigma_n | \bm{\sigma}' \right) - \delta_{\sigma_n, \sigma^\prime_n} \right] P^{(\mathrm{s})}\left( \bm{\sigma}' \right),
\end{equation}
where we have defined
\begin{equation}
 T_n\left( \sigma_n | \bm{\sigma}' \right) \equiv \sum_\tau W\left( \tau | \bm{\sigma}' \right) \hat{T}_n\left( \sigma_n | \sigma^\prime_n, \tau \right).
 \label{eq:transition_n}
\end{equation}
Regarding $\delta_{\sigma_n,\sigma_n'}$ as 
representing the strategy ``Repeat'', 
where player $n$ repeats the previous action with probability one, 
one can readily see that Eq.~\eqref{eq:Akin} is an extension
of Akin's lemma~\cite{Aki2012,HWTN2014,Aki2015,McAHau2016},
relating a player's strategy with the stationary distribution,
to the multi-player multi-action public-monitoring case. 
Letting
\begin{equation}
  \label{eq:def_tTn}
 \tilde{T}_n\left( \sigma_n | \bm{\sigma}' \right) \equiv T_n\left( \sigma_n | \bm{\sigma}' \right) - \delta_{\sigma_n, \sigma^\prime_n},
\end{equation}
Eq.~\eqref{eq:Akin} means that the average of $\tilde{T}_n\left( \sigma_n | \bm{\sigma}' \right)$ with respect to the stationary distribution
is zero for any $n$ and $\sigma_n$.
We remark that all players are assumed to know the functional form of $W\left( \tau | \bm{\sigma}' \right)$, and that $\hat{T}_n\left( \sigma_n | \sigma^\prime_n, \tau \right)$,
and thus $T_n(\sigma_n|\bm{\sigma}')$ as well,
are solely under control of player $n$.
Because of the normalization condition $\sum_{\sigma_n=1}^{M_n} T_n\left( \sigma_n | \bm{\sigma}' \right) = 1$, the relation
\begin{equation}
 \sum_{\sigma_n=1}^{M_n} \tilde{T}_n\left( \sigma_n | \bm{\sigma}' \right) = 0
 \label{eq:tt_norm}
\end{equation}
holds.

Let $\tilde{\bm{T}}_n(\sigma_n)\equiv (\tilde{T}_n(\sigma_n|\bm{\sigma^\prime}))_{\bm{\sigma^\prime}\in\Sigma}$,
which we call the strategy vector of player $n$
associated with action $\sigma_n$.
(Another name for $\tilde{\bm{T}}_n(\sigma_n)$
is the Press-Dyson vector~\cite{Aki2012}.) 
A strategy of player $n$ is represented as an $M\times M_n$ matrix
$\mathcal{T}_n\equiv (\tilde{\bm{T}}_n(1),\cdots,\tilde{\bm{T}}_n(M_n))$
composed of the strategy vectors for her actions $\sigma_n\in\{1,\ldots,M_n\}$. 
For a matrix $A$, 
let $\spn A$ be the subspace
spanned by the column vectors of $A$.
Let $\mathbf{0}_m$ and $\mathbf{1}_m$ denote the $m$-dimensional zero vector
and the $m$-dimensional vector of all ones, respectively. 
From Eq.~\eqref{eq:tt_norm}, one has
\begin{equation}
  \label{eq:null}
  \mathcal{T}_n\mathbf{1}_{M_n}
  =\sum_{\sigma_n=1}^{M_n} \tilde{\bm{T}}_n(\sigma_n)
  =\mathbf{0}_M
\end{equation}
for any player $n$,
implying that the dimension of $\spn\mathcal{T}_n$ is at most $(M_n-1)$.

Let $\bm{\rho}\equiv (P^{(\mathrm{s})}(\bm{\sigma}))_{\bm{\sigma}\in\Sigma}$
be the vector representation of the stationary distribution
$P^{(\mathrm{s})}(\bm{\sigma})$. 
When player $n$ chooses a strategy $\mathcal{T}_n$, 
for any vector $\bm{v}\in\spn\mathcal{T}_n$,
one has $\bm{\rho}^\mathsf{T}\bm{v}=0$ due to Eq.~\eqref{eq:Akin}.
In other words,
the expectation of $\bm{v}$ with respect to the stationary distribution
$\bm{\rho}$ vanishes.

Let $\mathcal{S}\equiv (\mathbf{1}_M,\bm{s}_1,\cdots,\bm{s}_N)$
and $V_n\equiv \spn\mathcal{T}_n\cap\spn\mathcal{S}$.
The following definition is an extension of the notion of the ZD strategy \cite{PreDys2012,Aki2012}
to multi-player multi-action public-monitoring games.

\begin{definition}
  A \emph{zero-determinant (ZD) strategy} is defined as a strategy
  $\mathcal{T}_n$ for which $\dim V_n\ge1$ holds. 
\end{definition}

To see that this is indeed an extended definition of the ZD strategy,
note that any vector $\bm{u}\in\spn\mathcal{S}$ is represented as
$\bm{u}=\mathcal{S}\bm{\alpha}$, where
$\bm{\alpha}\equiv (\alpha_0,\alpha_1,\cdots,\alpha_N)^\mathsf{T}$
is the coefficient vector.
Let $\bm{e}\equiv (1,e_1,\cdots,e_N)^\mathsf{T}=\mathcal{S}^\mathsf{T}\bm{\rho}$
be the vector with element $e_n$ equal to the expected payoff
$e_n\equiv \left\langle s_n(\bm{\sigma}) \right\rangle_\mathrm{s}$ of player $n$ in the steady state.
When player $n$ employs a ZD strategy,
it amounts to enforcing linear relations $\bm{e}^\mathsf{T}\bm{\alpha}=\bm{\rho}^\mathsf{T} \mathcal{S}\bm{\alpha} =0$
on $\bm{e}$ with $\bm{\alpha}$ satisfying
$\mathcal{S}\bm{\alpha}\in V_n$.

\subsection*{Consistency}
A question naturally arises:
When more than one of the players employ ZD strategies,
are they ``consistent'', that is, do linear payoff relations enforced by the players always have solutions?
For example, in a two-player game, when player $1$ enforces $\sum_{n=1}^2 \alpha_n e_n = \gamma$ by a ZD strategy and player $2$ enforces $\sum_{n=1}^2 \alpha^\prime_n e_n = \gamma^\prime$ by a ZD strategy, do the simultaneous equations of $(e_1, e_2)$ have a solution?
Let $N'$ be the set of players who employ ZD strategies.
The set 
$E\equiv \{\bm{e}\in\{1\}\times\mathbb{R}^N:\bm{e}^\mathsf{T}\bm{\alpha}=0,\forall\bm{\alpha},
\mathcal{S}\bm{\alpha}\in\spn(V_n)_{n\in N'}\}$ consists of
all combinations of the expected payoffs that satisfy the
enforced linear relations by the players in $N'$. 
If $E$ is empty, then it implies that the set of ZD strategies
is inconsistent in the sense that there is no valid solution
of the linear relations enforced by the players.

\begin{definition}
  ZD strategies are said to be \emph{consistent} when $E$ is not empty.
\end{definition}

In the multi-player setting,
one may regard $N'$ as a variant of a ZD strategy alliance~\cite{HWTN2014},
where the players in $N'$ agree to coordinate
on the linear relations to be enforced on the expected payoffs.
The above question then amounts to asking whether it is possible
for a player to serve as a counteracting agent
who participates in the ZD strategy alliance
with a hidden intention to invalidate it 
by adopting a ZD strategy that is inconsistent with others. 

The following proposition is the first main result of this paper.
\begin{proposition}
\label{prop:ZDconsistent}
Any set of ZD strategies is consistent.
\end{proposition}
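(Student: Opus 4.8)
The plan is to exhibit an explicit witness showing $E\neq\emptyset$: namely the vector of expected payoffs generated by the stationary distribution of the Markov chain in which every player in $N'$ actually plays her ZD strategy (and the remaining players play arbitrary strategies). The whole point is that the definition of a ZD strategy was built directly on top of the extended Akin's lemma Eq.~\eqref{eq:Akin}, which already couples the stationary distribution $\bm{\rho}$ to $\spn\mathcal{T}_n$, so consistency should fall out almost immediately.

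First I would note that the state space $\Sigma$ is finite, so the transition matrix $T$ admits at least one stationary distribution $P^{(\mathrm{s})}$ satisfying Eq.~\eqref{eq:steady}; let $\bm{\rho}$ be its vector form. No irreducibility is needed here — existence of a stationary distribution on a finite state space is all that is required — and one has $\bm{\rho}^\mathsf{T}\mathbf{1}_M=1$ by normalization. Next I would invoke Eq.~\eqref{eq:Akin}: for each $n\in N'$ and each action $\sigma_n$, $\bm{\rho}^\mathsf{T}\tilde{\bm{T}}_n(\sigma_n)=0$, hence by linearity $\bm{\rho}^\mathsf{T}\bm{v}=0$ for every $\bm{v}\in\spn\mathcal{T}_n$, and in particular for every $\bm{v}\in V_n=\spn\mathcal{T}_n\cap\spn\mathcal{S}$. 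Since $\spn(V_n)_{n\in N'}$ is spanned by vectors each lying in some $V_n$, one more appeal to linearity gives $\bm{\rho}^\mathsf{T}\bm{v}=0$ for all $\bm{v}\in\spn(V_n)_{n\in N'}$.

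Finally I would set $\bm{e}\equiv\mathcal{S}^\mathsf{T}\bm{\rho}=(1,e_1,\cdots,e_N)^\mathsf{T}$. Then $\bm{e}\in\{1\}\times\mathbb{R}^N$ by the normalization noted above, and for any $\bm{\alpha}$ with $\mathcal{S}\bm{\alpha}\in\spn(V_n)_{n\in N'}$ we get $\bm{e}^\mathsf{T}\bm{\alpha}=\bm{\rho}^\mathsf{T}\mathcal{S}\bm{\alpha}=0$ from the previous step. Hence $\bm{e}\in E$ and $E\neq\emptyset$, which is the claim.

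There is essentially no hard step: the content is that the ZD definition is engineered so that the actual stationary distribution already satisfies every enforced relation, so the set of payoffs is never empty. The only places deserving a sentence of care are (i) that a stationary distribution exists at all without assuming irreducibility, and (ii) that annihilation by $\bm{\rho}^\mathsf{T}$ passes from each individual subspace $V_n$ to their sum $\spn(V_n)_{n\in N'}$; both are immediate. If anything, the real ``obstacle'' is presentational — making clear that the $\bm{e}$ we construct is genuinely realized by an actual play of the game in which the players in $N'$ use their claimed ZD strategies, so that the statement is not vacuous.
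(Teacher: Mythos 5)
Your proof is correct, but it takes a genuinely different route from the paper's. You exhibit an explicit witness: take any stationary distribution $\bm{\rho}$ of the play in which the players in $N'$ actually use their ZD strategies (existence needs only finiteness of $\Sigma$, not irreducibility), and set $\bm{e}=\mathcal{S}^\mathsf{T}\bm{\rho}$; the extended Akin's lemma, Eq.~\eqref{eq:Akin}, gives $\bm{\rho}^\mathsf{T}\bm{v}=0$ for every $\bm{v}\in\spn\mathcal{T}_n$ with $n\in N'$, hence by linearity for every vector of $\spn(V_n)_{n\in N'}$, so $\bm{e}\in E$. The paper instead treats consistency as an abstract linear-algebra question: it picks a basis $\mathcal{S}\bm{\alpha}_1,\ldots,\mathcal{S}\bm{\alpha}_K$ of $\spn(V_n)_{n\in N'}$, invokes the Rouch\'{e}-Capelli criterion $\rank\bar{A}=\rank A$ for the system in $\bar{\bm{e}}$, and rules out a rank mismatch by showing that a mismatch would place a non-zero multiple of $\mathbf{1}_M$ in $\spn\mathcal{T}$, contradicting Lemma~\ref{lemma:consistent}. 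The two arguments rest on the same underlying fact---the stationary distribution annihilates all Press-Dyson vectors---since Lemma~\ref{lemma:consistent} is itself proved using $\bm{\rho}$; what differs is the packaging. Your version is shorter and constructive, and it makes the substantive point that the payoffs actually realized in the play satisfy every enforced relation, so the statement is witnessed rather than merely non-vacuous. The paper's version isolates a purely algebraic ingredient ($\mathbf{1}_M\notin\spn\mathcal{T}$) and a rank condition that dovetails with its subsequent dimension counting (e.g., $\dim\spn(V_n)_{n\in N'}\le N$). Your two flagged caveats are indeed harmless, and the only sentence worth adding is that the players outside $N'$ should be assigned some fixed strategies of the memory-one, public-monitoring form of the model so that the Markov chain of Eq.~\eqref{eq:mc} (and hence Eq.~\eqref{eq:Akin}) applies; this is presentational, not a gap.
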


\begin{proof}
We first note that the following property holds for strategy vectors, whose proof is given in Methods.
\begin{lemma}
\label{lemma:consistent}
Let $\mathcal{T}=(\mathcal{T}_1,\cdots,\mathcal{T}_N)$. 
Then $\mathbf{1}_M\not\in\spn\mathcal{T}$.
\end{lemma}

For any set $\spn(V_n)_{n\in N'}$ of ZD strategies, 
let $K$ be the dimension of $\spn(V_n)_{n\in N'}$,
and let $\bm{u}_1=\mathcal{S}\bm{\alpha}_1,\cdots,\bm{u}_K=\mathcal{S}\bm{\alpha}_K$ be a basis of $\spn(V_n)_{n\in N'}$.
The expected payoff vector $\bm{e}=(1,\bar{\bm{e}}^\mathsf{T})^\mathsf{T}$ should be
given by a non-zero solution 
of the linear equation $\bar{\bm{e}}^\mathsf{T}\bar{A}+\bm{b}^\mathsf{T}=\mathbf{0}_K^\mathsf{T}$
in $\bar{\bm{e}}$, where we define $A$, $\bm{b}$, and $\bar{A}$ as
\begin{equation}
  \label{eq:A}
  A=\left(\begin{array}{c}
    \bm{b}^\mathsf{T}\\\strut\bar{A}\end{array}\right)
  \equiv (\bm{\alpha}_1,
    \bm{\alpha}_2,
    \cdots,
    \bm{\alpha}_K).
\end{equation}
One has
\begin{equation}
  \mathcal{S}A=(
    \bm{u}_1,\bm{u}_2,\cdots,\bm{u}_K)
  =\mathbf{1}_M\bm{b}^\mathsf{T}
  +\bar{\mathcal{S}}\bar{A},
\end{equation}
where $\bar{\mathcal{S}}\equiv (\bm{s}_1,\cdots,\bm{s}_N)$. 

The Rouch\'{e}-Capelli theorem~\cite{ShaRem2012}
tells us that $\rank\bar{A}=\rank A$ is 
a necessary and sufficient condition
for the linear equation $\bar{\bm{e}}^\mathsf{T}\bar{A}+\bm{b}^\mathsf{T}=\mathbf{0}_K^\mathsf{T}$
in $\bar{\bm{e}}$ to have a solution, that is, 
for $\spn(V_n)_{n\in N'}$ to be consistent (because $A$ is augmented matrix).
An equivalent expression of this condition is 
that there is no vector $\bm{c}\in\mathbb{R}^K$
such that $\bar{A}\bm{c}=\mathbf{0}_N$
and $\bm{b}^\mathsf{T}\bm{c}\not=0$ hold (which ensures that there is no elementary operations which make the rank of augmented matrix larger than that of the original matrix). 
Assume to the contrary that there exist $\bm{c}\in\mathbb{R}^K$
such that $\bar{A}\bm{c}=\mathbf{0}_N$ and $\bm{b}^\mathsf{T}\bm{c}\not=0$ hold.
One would then have 
\begin{equation}
  \mathcal{S}A\bm{c}
  =\mathbf{1}_M\bm{b}^\mathsf{T}\bm{c}+\bar{\mathcal{S}}\bar{A}\bm{c}
  =(\bm{b}^\mathsf{T}\bm{c})\mathbf{1}_M.
\end{equation}
On the other hand, $\mathcal{S}A\bm{c}=\sum_{k=1}^Kc_k\bm{u}_k$
is a linear combination of
$\bm{u}_1,\cdots,\bm{u}_K\in\spn(V_n)_{n\in N'}\subset \spn \mathcal{T}$,
so that Lemma~\ref{lemma:consistent} states that it should be zero
if it is proportional to $\mathbf{1}_M$, 
leading to contradiction.
\end{proof}

Proposition~\ref{prop:ZDconsistent} states that
it is impossible for any player to serve as
a counteracting agent to invalidate ZD strategy alliances.
This statement is quite general in that
it applies to any instance of repeated games
covered by our formulation. 

In Ref.~\cite{HDND2016}, it was shown that every player can have at most one master player, who can play an equalizer strategy on the given player
(that is, controlling the expected payoff of the given player), in multi-player multi-action games.
Indeed, our general result on the absence of inconsistent ZD strategies
(Proposition \ref{prop:ZDconsistent}) 
immediately implies that more than one ZD players cannot simultaneously
control the expected payoff of a player to different values. 
Therefore, our result generalizes their result on equalizer strategy to arbitrary ZD strategies.

Since the dimension of $\spn\mathcal{T}_n$ is at most $(M_n-1)$,
depending on $\mathcal{S}$,
it should be possible for player $n$ with $M_n\ge3$
to adopt a ZD strategy for which $\dim V_n\ge2$ holds.
The dimension of $V_n$ corresponds to
the number of independent linear relations to be enforced
on the expected payoffs of the players,
so that it implies that one player may be able to
enforce multiple independent linear relations.
On the other hand, our result on the absence of inconsistent
ZD strategies implies that for any set $N'$ of ZD players
the dimension of $\spn(V_n)_{n\in N'}$ should be at most $N$,
the number of players, 
since any set of ZD strategies should contain at most $N$
independent linear relations if it is consistent. 
This in turn implies that if the dimension of $\spn(V_n)_{n\in N'}$
is equal to $N$ for a subset $N'$ of players
then players not in $N'$ cannot employ independent ZD strategy any more.

\subsection*{Independence}
Another naturally-arising question would be
regarding independence for a set of ZD strategies, 
which we define as follows: 
\begin{definition}
A set $\{\mathcal{T}_n\}_{n\in N'}$ of ZD strategies
is \emph{independent} if any set $\{\bm{v}_n\}_{n\in N'}$
of non-zero vectors $\bm{v}_n$ in $V_n$ is linearly independent.
Otherwise, $\{\mathcal{T}_n\}_{n\in N'}$ is said to be \emph{dependent}. 
\end{definition}
If a set of ZD strategies is dependent,
then there exists a ZD player whose ZD strategy adds
no linear constraints other than those already imposed
by other ZD players.
One of the simplest example of a dependent set of ZD strategies
is the case where two players enforce exactly the same linear relation
to the expected payoffs.
Our second main result is to show 
that any set of ZD strategies is independent under a general condition.
\begin{proposition}
\label{prop:Indep}
Let $N'$ be a subset of players.
Assume that $\tilde{\bm{T}}_n(\sigma_n)$ does not have
zero elements for any $n\in N'$ and any $\sigma_n\in\{1,\ldots,M_n\}$. 
Then, any set $\{\mathcal{T}_n\}_{n\in N'}$ of ZD strategies
of players in $N'$ is independent.
\end{proposition}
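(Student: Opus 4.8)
The plan is to argue by contradiction, exploiting a sign structure that the no-zero-element hypothesis imposes on the vectors in $\spn\mathcal{T}_n$. Since $V_n\subseteq\spn\mathcal{T}_n$, it is enough to prove the stronger statement that the subspaces $\spn\mathcal{T}_n$, $n\in N'$, are linearly independent, i.e. that any family consisting of one non-zero vector from each of them is linearly independent. So suppose $\bm{v}_n\in\spn\mathcal{T}_n\setminus\{\mathbf{0}_M\}$ for every $n\in N'$ and that $\sum_{n\in N'}c_n\bm{v}_n=\mathbf{0}_M$ with not all $c_n$ equal to zero; put $N''\equiv\{n\in N':c_n\neq0\}\neq\emptyset$. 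For each $n\in N''$ fix a coefficient vector $\bm{x}_n=(x_{n,\sigma_n})_{\sigma_n}$ with $\bm{v}_n=\mathcal{T}_n\bm{x}_n=\sum_{\sigma_n}x_{n,\sigma_n}\tilde{\bm{T}}_n(\sigma_n)$. Because $\mathcal{T}_n\mathbf{1}_{M_n}=\mathbf{0}_M$ by Eq.~\eqref{eq:null}, adding a constant to $\bm{x}_n$ does not change $\bm{v}_n$, and since $\bm{v}_n\neq\mathbf{0}_M$ the vector $\bm{x}_n$ cannot be constant, so $\min_{\sigma_n}x_{n,\sigma_n}<\max_{\sigma_n}x_{n,\sigma_n}$.

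The key step is a pointwise formula for the components of $\bm{v}_n$. Writing $v_{n,\bm{\sigma}'}$ for the $\bm{\sigma}'$-component of $\bm{v}_n$ and using $\tilde{T}_n(\sigma_n|\bm{\sigma}')=T_n(\sigma_n|\bm{\sigma}')-\delta_{\sigma_n,\sigma'_n}$, one obtains for every state $\bm{\sigma}'\in\Sigma$
\begin{equation}
  v_{n,\bm{\sigma}'}=\sum_{\sigma_n}x_{n,\sigma_n}T_n(\sigma_n|\bm{\sigma}')-x_{n,\sigma'_n},
\end{equation}
i.e. $v_{n,\bm{\sigma}'}$ is the expectation of $x_{n,\cdot}$ under the probability distribution $T_n(\cdot|\bm{\sigma}')$ minus the value of $x_{n,\cdot}$ at the action $\sigma'_n$ repeated from $\bm{\sigma}'$. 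The no-zero-element hypothesis means $T_n(\sigma_n|\bm{\sigma}')>0$ for every $\sigma_n\neq\sigma'_n$ and every $\bm{\sigma}'$, so whenever $\sigma'_n$ is a minimizer of $x_{n,\cdot}$ the distribution $T_n(\cdot|\bm{\sigma}')$ still puts positive weight on some action where $x_{n,\cdot}$ is strictly larger (such an action exists because $\bm{x}_n$ is non-constant), forcing the expectation to strictly exceed $x_{n,\sigma'_n}$; symmetrically when $\sigma'_n$ is a maximizer. Hence $v_{n,\bm{\sigma}'}>0$ for every $\bm{\sigma}'$ whose $n$th coordinate minimizes $x_{n,\cdot}$, and $v_{n,\bm{\sigma}'}<0$ for every $\bm{\sigma}'$ whose $n$th coordinate maximizes $x_{n,\cdot}$; crucially, each of these conditions constrains $\bm{\sigma}'$ only in its $n$th coordinate.

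Now build a single state $\bm{\sigma}^{\dagger}\in\Sigma$: for each $n\in N''$ let its $n$th coordinate be a minimizer of $x_{n,\cdot}$ if $c_n>0$ and a maximizer of $x_{n,\cdot}$ if $c_n<0$, and choose the remaining coordinates arbitrarily. Since the players in $N''$ index distinct coordinates of the state, these prescriptions never conflict, so $\bm{\sigma}^{\dagger}$ is well defined. By the previous paragraph, $c_nv_{n,\bm{\sigma}^{\dagger}}>0$ for every $n\in N''$, and therefore $\sum_{n\in N'}c_nv_{n,\bm{\sigma}^{\dagger}}=\sum_{n\in N''}c_nv_{n,\bm{\sigma}^{\dagger}}>0$, contradicting $\sum_{n\in N'}c_n\bm{v}_n=\mathbf{0}_M$. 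Hence all $c_n$ vanish, which proves the claim (and a fortiori the independence of any family of non-zero vectors $\bm{v}_n\in V_n$). The only genuine obstacle is locating the pointwise expectation formula and the accompanying sign pattern on coordinate slabs; once those are available the combinatorial conclusion is immediate, precisely because distinct ZD players act on independent coordinates of the state space, so their individual sign constraints can be met at one and the same state.
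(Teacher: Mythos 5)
Your proof is correct and follows essentially the same route as the paper's Lemma~\ref{lemma:indep}: it reduces to independence of arbitrary non-zero vectors in $\spn\mathcal{T}_n$, exploits the sign structure of Eq.~\eqref{eq:orthant} (off-diagonal entries strictly positive under the no-zero hypothesis), and evaluates the dependence relation at a single state whose $n$th coordinate extremizes each player's coefficient vector according to the sign of $c_n$. The only difference is cosmetic: the paper folds the sign into $\arg\min_{\sigma_n}\{a_nc_{n,\sigma_n}\}$ and concludes via vanishing nonnegative summands, whereas you obtain a strict inequality directly from non-constancy of $\bm{x}_n$.
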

See Methods for the proof.

It should be noted that when $\tilde{\bm{T}}_n(\sigma_n)$ has
zero elements then one might have dependent ZD strategies. 
A simple example can be found in a two-player two-action perfect-monitoring (iterated prisoner's dilemma) game:
Let the payoff vectors $\bm{s}_1$ and $\bm{s}_2$ for players 1 and 2 be
$\bm{s}_1 = \left( R, S, T, P \right)^\mathsf{T}$ and $\bm{s}_2 = \left( R, T, S, P \right)^\mathsf{T}$, with $T\not=S$.
If player 1 adopts the strategy
\begin{equation}
  \tilde{\bm{T}}_1(1)
  = \left( 0, -1, 1, 0 \right)^\mathsf{T}
  = \frac{1}{T-S} \bm{s}_1 - \frac{1}{T-S} \bm{s}_2,
\end{equation}
then it enforces the linear payoff relation $e_1=e_2$.
This strategy is a well-known tit-for-tat strategy~\cite{PreDys2012}.
By symmetry, player 2 can also adopt the same strategy $\tilde{\bm{T}}_2(1)=-\tilde{\bm{T}}_1(1)$, 
implying that these two strategies are indeed dependent.

\subsection*{Simultaneous multiple linear relations by one player}
As mentioned above,
when the number $M_n$ of possible actions for player $n$ is more than two, player $n$ may be able to employ a ZD strategy with $\dim V_n\ge2$
to simultaneously enforce more than one linear relations.
(We note that this is impossible for public goods game~\cite{HWTN2014,PHRT2015} because the number of action for each player is two.)
Such a possibility has never been reported in the context of ZD strategies.
Here, we provide a simple example of such a situation in a two-player three-action symmetric game.

We consider the $3\times 3$ symmetric game
\begin{align}
 \bm{s}_1 &= \left( 0, r_1, 0, r_2, 0, 0, 0, 0, 0 \right)^\mathsf{T} \nonumber \\
 \bm{s}_2 &= \left( 0, r_2, 0, r_1, 0, 0, 0, 0, 0 \right)^\mathsf{T}.
\end{align}
We remark that $\textrm{\boldmath $s$}_1$, $\textrm{\boldmath $s$}_2$, and $\textrm{\boldmath $1$}_9$ are linearly independent when $r_1\neq r_2$ and $r_1\neq -r_2$.
We choose strategies of player $1$ as
\begin{align}
 \bm{T}_1(1) &= \left( 1, 1-p, 1, p^\prime, 0, 0, 0, 0, 0 \right)^\mathsf{T} \nonumber \\
 \bm{T}_1(2) &= \left( 0, q, 0, 1-q^\prime, 1, 1, 0, 0, 0 \right)^\mathsf{T} \nonumber \\
 \bm{T}_1(3) &= \left( 0, p-q, 0, q^\prime-p^\prime, 0, 0, 1, 1, 1 \right)^\mathsf{T}
\end{align}
with $0\leq p\leq 1$, $0\leq q\leq 1$, $0\leq p^\prime \leq 1$, $0\leq q^\prime \leq 1$, $q\leq p$, and $p^\prime \leq q^\prime$.
Then we obtain
\begin{align}
 \frac{q^\prime r_1 + qr_2}{p^\prime q - pq^\prime} \bm{\tilde{T}}_1(1) + \frac{p^\prime r_1 + pr_2}{p^\prime q - pq^\prime} \bm{\tilde{T}}_1(2) &= \bm{s}_1 \\
 \frac{q^\prime r_2 + qr_1}{p^\prime q - pq^\prime} \bm{\tilde{T}}_1(1) + \frac{p^\prime r_2 + pr_1}{p^\prime q - pq^\prime} \bm{\tilde{T}}_1(2) &= \bm{s}_2.
\end{align}
Therefore, player $1$ can simultaneously control average payoffs of both players, $e_1$ and $e_2$, as $e_1 = e_2 = 0$.
Note that $\bm{\sigma}$ with $s_1(\bm{\sigma})=0$ is an absorbing state regardless of the strategy of player $2$ in this case.

In general, when one player simultaneously enforces two linear relations in two-player multi-action symmetric games, only $e_1 = e_2 = C$ is allowed with some $C$.
This is explained as follows: 
Assume that player $1$ can simultaneously enforce $e_1=C_1$ and $e_2=C_2$ with $C_1\neq C_2$ by one ZD strategy.
Because the game is symmetric, player $2$ can also simultaneously enforce $e_1=C_2$ and $e_2=C_1$ independently by one ZD strategy.
This contradicts the consistency of ZD strategies (Proposition \ref{prop:ZDconsistent}).
Therefore, the only possibility is $e_1 = e_2 = C$.

The above argument can be extended straightforwardly
to the multi-player case.
For that purpose, we introduce some notions of symmetric multi-player games.
The following definition of a symmetric multi-player game
is due to von Neumann and Morgenstern~\cite[Section 28]{vonNeumannMorgenstern1953}. 
\begin{definition}
\label{def:symmetric}
A game is \emph{symmetric}
with respect to a permutation $\pi$ on $\{1,\ldots,N\}$
if $M_n=M_{\pi(n)}$ holds for any $n\in\{1,\ldots,N\}$
and if $\pi$ preserves the payoff structure of the game, that is,
\begin{equation}
  s_{\pi(n)}(\bm{\sigma})=s_n(\bm{\sigma}_\pi)
\end{equation}
holds for any $\bm{\sigma}\in\Sigma$ and for any $n\in\{1,\ldots,N\}$,
where $\bm{\sigma}_\pi\equiv(\sigma_{\pi(1)},\ldots,\sigma_{\pi(N)})$. 
\end{definition}
The following definition is due to Ref.~\cite{Plan2017}. 
\begin{definition}
A game is \emph{weakly symmetric} if for any pair of players $n$
and $\bar{n}$ there exists some permutation $\pi$ on $\{1,\ldots,N\}$
satisfying $\pi(n)=\bar{n}$ such that 
the game is symmetric with respect to $\pi$.
\end{definition}
Consider an $N$-player weakly symmetric game. 
Assume that one player simultaneously enforces
$N$ independent linear relations on the average payoffs
$\{e_n\}_{n\in\{1,\ldots,N\}}$ of $N$ players
via adopting an $N$-dimensional ZD strategy.
(Note that for this to be possible the number $M_n$ of actions
should satisfy $M_n\ge N+1$).
Then, the average payoffs $\{e_n\}_{n\in\{1,\ldots,N\}}$ 
should be simultaneously controlled,
but they should satisfy $e_1=e_2=\cdots=e_n$ 
due to the consistency of ZD strategies.

The difficulty of construction of a ZD strategy of one player with dimension $N$ in weakly symmetric $N$-player games can be seen in the following two propositions, whose proofs are given in Methods.
\begin{proposition}
\label{prop:nosimZD_nonzeroPD}
In a weakly symmetric $N$-player game, 
if the strategy vectors of one player contain no zero element,
then a ZD strategy of the player with dimension $N$ is impossible. 
\end{proposition}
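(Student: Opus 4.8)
The plan is to obtain a contradiction by ``mirroring'' the hypothesized strategy onto a second player with the help of weak symmetry, and then combining the independence result (Proposition~\ref{prop:Indep}) with the trivial dimension bound $\dim\spn\mathcal{S}\le N+1$. I assume $N\ge2$; for $N=1$ a ``dimension-$N$ ZD strategy'' is just an ordinary ZD strategy and there is no distinct second player to mirror onto, so nothing is claimed. Suppose for contradiction that player $n$ has a ZD strategy $\mathcal{T}_n$ with $\dim V_n=N$ whose strategy vectors $\tilde{\bm{T}}_n(\sigma_n)$ have no zero element.

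First I would pick a partner $\bar n\neq n$ and, by weak symmetry, a permutation $\pi$ on $\{1,\ldots,N\}$ with $\pi(n)=\bar n$ with respect to which the game is symmetric. Since $M_m=M_{\pi(m)}$, the relabelling $\bm{\sigma}\mapsto\bm{\sigma}_\pi$ is a bijection of $\Sigma$ and induces a permutation matrix $P_\pi$ on $\mathbb{R}^M$. The symmetry identity $s_{\pi(m)}(\bm{\sigma})=s_m(\bm{\sigma}_\pi)$ shows that $P_\pi$ only permutes $\bm{s}_1,\ldots,\bm{s}_N$ among themselves and fixes $\mathbf{1}_M$, so $\spn\mathcal{S}$ is $P_\pi$-invariant. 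I would then let $\mathcal{T}_{\bar n}$ be the $P_\pi$-image of $\mathcal{T}_n$ --- player $\bar n$ imitating, in the relabelled game, what player $n$ would do --- and check that it is a legitimate strategy of $\bar n$: the normalization $\mathcal{T}_{\bar n}\mathbf{1}_{M_{\bar n}}=\mathbf{0}_M$ of Eq.~\eqref{eq:null} is inherited; the sign pattern of a strategy vector is respected because $\pi$ carries the ``$\sigma'_n=\sigma_n$'' coordinates of player $n$ onto the ``$\sigma'_{\bar n}=\sigma_{\bar n}$'' coordinates of player $\bar n$; and $P_\pi$, being a coordinate permutation, cannot create zero entries, so $\tilde{\bm{T}}_{\bar n}(\sigma_{\bar n})$ again has no zero element. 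Finally, using $P_\pi$-invariance of $\spn\mathcal{S}$, one gets $V_{\bar n}=\spn\mathcal{T}_{\bar n}\cap\spn\mathcal{S}=P_\pi(V_n)$, hence $\dim V_{\bar n}=\dim V_n=N$ and $\mathcal{T}_{\bar n}$ is a ZD strategy.

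To conclude, I would apply Proposition~\ref{prop:Indep} with $N'=\{n,\bar n\}$: the strategy vectors of both players have no zero element, so $\{\mathcal{T}_n,\mathcal{T}_{\bar n}\}$ is independent, which forces $V_n\cap V_{\bar n}=\{\mathbf{0}_M\}$. Then
\begin{equation}
  2N=\dim V_n+\dim V_{\bar n}=\dim(V_n+V_{\bar n})\le\dim\spn\mathcal{S}\le N+1 ,
\end{equation}
so $N\le1$, contradicting $N\ge2$; therefore no such strategy exists.

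The linear algebra here --- $P_\pi$-invariance of $\spn\mathcal{S}$, the intersection being trivial, the dimension count --- is routine. The one step requiring real care, and the main obstacle, is the construction of the mirrored strategy $\mathcal{T}_{\bar n}$ and the verification that it is an admissible strategy of player $\bar n$: one must track how $\pi$ must also act on the common signal $\tau$ so that $\mathcal{T}_{\bar n}$ arises from some $\hat{T}_{\bar n}(\sigma_{\bar n}\,|\,\sigma'_{\bar n},\tau)$, which is where the symmetry of the game --- including its monitoring structure --- is genuinely used.
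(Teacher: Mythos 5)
Your route is genuinely different from the paper's, and it almost works, but the step you yourself single out as the ``main obstacle'' — admissibility of the mirrored strategy — is a real gap, not a routine verification. The paper's Definitions of symmetry and weak symmetry constrain only the action-set sizes and the payoffs $s_n$; they say nothing about the monitoring structure $W(\tau|\bm{\sigma}')$. In the public-monitoring setting a strategy of player $\bar n$ is not an arbitrary stochastic matrix with the right sign pattern: by Eq.~\eqref{eq:transition_n} it must have the form $T_{\bar n}(\sigma_{\bar n}|\bm{\sigma}')=\sum_\tau W(\tau|\bm{\sigma}')\hat T_{\bar n}(\sigma_{\bar n}|\sigma'_{\bar n},\tau)$, whereas your mirrored matrix has entries $T_n(\sigma|\bm{\sigma}'_\pi)=\sum_\tau W(\tau|\bm{\sigma}'_\pi)\hat T_n(\sigma|\sigma'_{\bar n},\tau)$; without an (unassumed) symmetry of $W$ there is in general no $\hat T_{\bar n}$ realizing it. Since the proposition is asserted for the paper's general public-monitoring games, Proposition~\ref{prop:Indep} cannot be invoked verbatim for $N'=\{n,\bar n\}$, because $\mathcal{T}_{\bar n}$ need not be a strategy of player $\bar n$ at all.

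The gap is repairable, and the repair shows why the paper proceeds differently: admissibility is never actually needed. The proof of Lemma~\ref{lemma:indep} uses only the normalization \eqref{eq:null}, the sign pattern \eqref{eq:orthant} taken relative to the player's own coordinate, and the absence of zero entries — all of which your permuted matrix inherits, since $(\bm{\sigma}'_\pi)_n=\sigma'_{\bar n}$. So if you re-run that argument for the formal permuted Press--Dyson matrix (rather than citing Proposition~\ref{prop:Indep} as a black box), your construction plus the dimension count $2N=\dim(V_n+V_{\bar n})\le\dim\spn\mathcal{S}\le N+1$ does close the proof for $N\ge2$, and is arguably cleaner than the paper's. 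The paper instead avoids introducing a second strategy altogether: its Lemma~\ref{lemma:sym}, applied with $\bar n=n$ and a permutation with $\pi(n)\ne n$, works with the same player's strategy vectors evaluated at permuted states and derives a max/min sign contradiction in place of your dimension count, which is precisely what makes the monitoring structure irrelevant there. (Your restriction to $N\ge2$ is fine; the paper's argument implicitly needs it too.)
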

\begin{proposition}
\label{prop:nosimZD_difpayoff}
In a weakly symmetric $N$-player game, if payoffs $s_n(\bm{\sigma})$ of player $n$ are different from each other for all $\bm{\sigma}$, then a ZD strategy with dimension $N$ is impossible.
\end{proposition}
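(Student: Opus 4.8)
\emph{Proof proposal.} The plan is to argue by contradiction. Suppose some player adopts a ZD strategy with $\dim V_n=N$. By weak symmetry all players have the same number of actions $M_\bullet$, and the multiset of entries of every payoff vector coincides with that of $\bm{s}_n$, so the distinctness hypothesis passes to every player; we may therefore assume it is player $n$ herself who adopts such a strategy $\mathcal{T}_n$. First I would dispose of a degenerate case: if $\mathbf{1}_M,\bm{s}_1,\ldots,\bm{s}_N$ are linearly dependent then $\dim\spn\mathcal{S}\le N$, forcing $V_n=\spn\mathcal{S}$; but then $\mathbf{1}_M\in V_n\subseteq\spn\mathcal{T}_n$, contradicting Lemma~\ref{lemma:consistent}. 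Hence assume $\mathbf{1}_M,\bm{s}_1,\ldots,\bm{s}_N$ are independent, so that $\dim\spn\mathcal{S}=N+1$ and $V_n$ is an $N$-dimensional subspace of $\spn\mathcal{S}$.

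Next I would pin down $V_n$. By Proposition~\ref{prop:ZDconsistent} the relations enforced by $\mathcal{T}_n$ have a solution $\bm{e}=(1,e_1,\ldots,e_N)^\mathsf{T}$, and because $\dim V_n=N$ this solution is unique; the weak-symmetry argument given above (player $n$'s strategy together with its symmetric images for the other players must remain jointly consistent) then forces $e_1=\cdots=e_N=:C$. A direct computation shows that $\mathcal{S}\bm{\alpha}\in V_n$ iff $\alpha_0=-C\sum_{j=1}^N\alpha_j$, in which case $\mathcal{S}\bm{\alpha}=\sum_{j=1}^N\alpha_j(\bm{s}_j-C\mathbf{1}_M)$; hence $V_n=\spn\{\bm{s}_1-C\mathbf{1}_M,\ldots,\bm{s}_N-C\mathbf{1}_M\}\subseteq\spn\mathcal{T}_n$.

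Now comes the heart of the argument. For each $j\in\{1,\ldots,N\}$ write $\bm{s}_j-C\mathbf{1}_M=\mathcal{T}_n\bm{c}^{(j)}$; since $\mathbf{1}_M,\bm{s}_j$ are independent this vector is nonzero, and since $\mathcal{T}_n\mathbf{1}_{M_n}=\mathbf{0}_M$ the coefficient vector $\bm{c}^{(j)}$ is non-constant, so it attains its maximum and minimum at two \emph{distinct} indices $b_j^+$ and $b_j^-$. Using $\sum_{\sigma_n}\tilde{T}_n(\sigma_n|\bm{\sigma}')=0$ together with the sign pattern $\tilde{T}_n(\sigma_n|\bm{\sigma}')\ge0$ for $\sigma_n\ne\sigma'_n$ (immediate since $T_n$ is a probability), the $\bm{\sigma}'$-component of $\bm{s}_j-C\mathbf{1}_M$ can be rewritten as $s_j(\bm{\sigma}')-C=\sum_{b\ne\sigma'_n}\tilde{T}_n(b|\bm{\sigma}')\bigl(c^{(j)}_b-c^{(j)}_{\sigma'_n}\bigr)$. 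Specializing $\sigma'_n=b_j^+$ makes every summand nonpositive and $\sigma'_n=b_j^-$ makes every summand nonnegative, so $s_j(\bm{\sigma})\le C$ for every state with $\sigma_n=b_j^+$ and $s_j(\bm{\sigma})\ge C$ for every state with $\sigma_n=b_j^-$. For $j=n$ this controls $s_n$ through its $n$-th coordinate. For a fixed $j\ne n$ --- which exists since $N\ge2$ --- let $\pi$ be a permutation witnessing weak symmetry with $\pi(n)=j$; substituting $s_j(\bm{\sigma})=s_n(\bm{\sigma}_\pi)$ and relabelling, the two inequalities become $s_n(\bm{\nu})\le C$ whenever $\nu_m=b_j^+$ and $s_n(\bm{\nu})\ge C$ whenever $\nu_m=b_j^-$, where $m:=\pi^{-1}(n)$; crucially $m\ne n$, since $\pi^{-1}(n)=n$ would give $\pi(n)=n\ne j$.

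Finally these two families of inequalities sandwich $s_n$. Pick any two states that agree off coordinates $n$ and $m$, one having $(\sigma_n,\sigma_m)=(b_n^-,b_j^+)$ and the other $(\sigma_n,\sigma_m)=(b_n^+,b_j^-)$ --- legitimate because $m\ne n$. At the first, $s_n\ge C$ (from $\sigma_n=b_n^-$) and $s_n\le C$ (from $\sigma_m=b_j^+$), so $s_n=C$; likewise $s_n=C$ at the second. Since $b_n^+\ne b_n^-$ these are two distinct states carrying the same payoff $C$ for player $n$, contradicting the pairwise distinctness of the entries of $\bm{s}_n$. Hence no ZD strategy of dimension $N$ exists. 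I expect the main obstacle to be the third paragraph: extracting the sandwich inequalities from the sign structure of the Press-Dyson vectors is short, but one must carefully track which coordinate of the state each inequality is attached to after applying the weak-symmetry permutation and verify that it is a coordinate other than $n$ --- without that, both sandwich constraints would fall on the same block of states and the distinctness of payoffs would not bite.
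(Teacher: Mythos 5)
Your proposal is correct and follows essentially the same route as the paper's proof: the forced common value $e_1=\cdots=e_N=C$ from consistency plus weak symmetry, the representation $\bm{s}_j-C\mathbf{1}_M=\mathcal{T}_n\bm{c}^{(j)}$, the sign structure of the Press--Dyson vectors at the argmax/argmin of the coefficient vectors, and the symmetry permutation used to pin two distinct states at which $s_n=C$, contradicting distinctness. The only differences are organizational (you place the contradiction on the ZD player's own payoff vector after transferring the distinctness hypothesis by symmetry, and you add explicit care about the degenerate case $\dim\spn\mathcal{S}\le N$ and about $b_j^+\ne b_j^-$), which amounts to a slightly cleaner write-up of the same argument.
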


\section*{Discussion}
In this paper, we have derived ZD strategies for general multi-player multi-action public-monitoring games, in which players cannot observe actions of other players.
By formulating ZD strategy in terms of linear algebra, we have proved that linear payoff relations enforced by ZD players are consistent.
Furthermore, we have proved that linear payoff relations enforced by players with ZD strategies are independent under a general condition. 
We emphasize that these results hold not only for imperfect-monitoring games but also for perfect-monitoring games.
We have also provided a simple example in which one player can simultaneously enforce more than one linear constraints on the expected payoffs. 
These results elucidate constraints on ZD strategies in terms of linear algebra.

Although we have discussed mathematical properties of ZD strategies if exist, we do not know the criterion for whether ZD strategies exist or not when a game is given.
For example, we can easily show that ZD strategy does not exist for the rock-paper-scissors game, which is the simplest two-player three-action symmetric zero-sum game.
(See \nameref{S1_Text} for the proof.)
Whereas, we can also show that there is a two-player three-action symmetric zero-sum game for which ZD strategy exists, which is also provided in \nameref{S1_Text}.
Generally, the dimension of $\spn\mathcal{S}$ is smaller than $N+1$ for zero-sum games, and construction of ZD strategies for zero-sum games is expected to be more difficult compared to non-zero-sum games.
Consistency together with constraints on payoffs such as symmetry and linear dependence may be useful to specify the space of ZD strategies which can exist.
Specifying a general criterion for the existence of ZD strategies is an important future problem.

In addition, it should be noted that ZD strategies are not always ``rational'' strategies, which have been a main subject of game theory.
Therefore, investigation of ZD strategies in terms of bounded rationality~\cite{Rub1998} may be needed.
Specifying the situation where ZD strategies are adopted is another important problem.

Another remark is related to memory of strategies.
In this work, we considered only memory-one strategies.
In Ref.~\cite{PreDys2012}, it has been proved that a player with longer memory does not have advantage over a player with short memory in terms of average payoff in two-player games.
In Ref.~\cite{PHRT2015,HDND2016}, it has been shown that this statement also holds for multi-player games.
Therefore, considering only memory-one strategies should be sufficient even in our public-monitoring situation.
Longer memory strategies attract much attentions in repeated games with implementation errors~\cite{HMCN2017,MurBae2018}.
Extension of ZD strategies to longer memory case may lead to different evolutionary behavior compared to memory-one strategies.

We remark on the effect of imperfect monitoring.
In perfect monitoring case, the strategy vectors are arbitrary as long as they satisfy the conditions for probability distributions.
In contrast, in imperfect monitoring case, forms of the strategy vectors are constrained by Eq.~\eqref{eq:transition_n}.
Therefore, the space of ZD strategies for imperfect-monitoring games is generally smaller than that for perfect-monitoring games.
In \nameref{S1_Text}, we provide examples of ZD strategies in simple imperfect-monitoring games.

\section*{Methods}
\subsection*{Proof of Lemma~\ref{lemma:consistent}}
Assume to the contrary that 
$\bm{v}\equiv\gamma \mathbf{1}_M\in\spn\mathcal{T}$ with $\gamma\not=0$.
Taking the inner product of $\bm{v}$ with 
the stationary distribution $\bm{\rho}$, one has 
$\bm{\rho}^\mathsf{T} \bm{v} = 0$ since $\bm{v}\in\spn\mathcal{T}$
is represented as a linear combination of the strategy vectors 
and since the inner product of a strategy vector and
the stationary distribution is zero. 
On the other hand, 
$\gamma \bm{\rho}^\mathsf{T} \mathbf{1}_M = \gamma$ holds 
because of the normalization of the stationary distribution.
Therefore we obtain $\gamma=0$, leading to contradiction.

\subsection*{Proof of Proposition~\ref{prop:Indep}}
We first show the following lemma.
\begin{lemma}
\label{lemma:indep}
Let $N'$ be a subset of players.
Assume that $\tilde{\bm{T}}_n(\sigma_n)$ does not have zero elements
for any $n\in N'$ and any $\sigma_n\in\{1,\ldots,M_n\}$.
For $n\in N'$, let $\bm{v}_n$ be an arbitrary non-zero vector
in $\spn\mathcal{T}_n$.
Then $\{\bm{v}_n\}_{n\in N'}$ are linearly independent.
\end{lemma}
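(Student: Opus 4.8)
The plan is to recast the statement as the assertion that the subspaces $\{\spn\mathcal{T}_n\}_{n\in N'}$ are independent: if $\sum_{n\in N'}\bm{w}_n=\mathbf{0}$ with each $\bm{w}_n\in\spn\mathcal{T}_n$, then every $\bm{w}_n=\mathbf{0}$. This yields the lemma, since a nontrivial vanishing combination $\sum_{n\in N'}\lambda_n\bm{v}_n=\mathbf{0}$ of nonzero $\bm{v}_n\in\spn\mathcal{T}_n$ would produce such a relation among the vectors $\bm{w}_n\equiv\lambda_n\bm{v}_n\in\spn\mathcal{T}_n$, forcing all $\lambda_n=0$. So I assume for contradiction that $\sum_{n\in N'}\bm{w}_n=\mathbf{0}$ while $N''\equiv\{n\in N':\bm{w}_n\ne\mathbf{0}\}$ is nonempty, and write each $\bm{w}_n=\sum_{\sigma_n=1}^{M_n}c^{(n)}_{\sigma_n}\tilde{\bm{T}}_n(\sigma_n)$. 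By Eq.~\eqref{eq:null}, if all coefficients $c^{(n)}_{\sigma_n}$ were equal then $\bm{w}_n$ would vanish; hence for $n\in N''$ the $c^{(n)}_{\sigma_n}$ are not all equal, and I choose $a_n$ to be an index at which $\sigma_n\mapsto c^{(n)}_{\sigma_n}$ attains its minimum (for $n\in N'\setminus N''$, take $a_n$ arbitrary).

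The core of the argument is to exhibit a single state on which every nonzero $\bm{w}_n$ has the same sign. Let $\bm{\sigma}^*$ be any state with $\sigma^*_n=a_n$ for every $n\in N'$ (coordinates outside $N'$ arbitrary). For $n\in N''$, the $\bm{\sigma}^*$-component of $\bm{w}_n$ is, using $\sigma^*_n=a_n$,
\begin{equation}
  w_n(\bm{\sigma}^*)=\sum_{\sigma_n=1}^{M_n}c^{(n)}_{\sigma_n}\,\tilde{T}_n(\sigma_n|\bm{\sigma}^*)=\sum_{\sigma_n=1}^{M_n}c^{(n)}_{\sigma_n}\,T_n(\sigma_n|\bm{\sigma}^*)-c^{(n)}_{a_n}.
\end{equation}
The no-zero-element hypothesis, evaluated at the coordinate $\bm{\sigma}^*$, forces $T_n(\sigma_n|\bm{\sigma}^*)>0$ for every $\sigma_n\ne a_n$ and $T_n(a_n|\bm{\sigma}^*)<1$; together with $\sum_{\sigma_n}T_n(\sigma_n|\bm{\sigma}^*)=1$ this makes $\{T_n(\sigma_n|\bm{\sigma}^*)\}_{\sigma_n}$ a probability vector with strictly positive mass on every index other than $a_n$. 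Hence $\sum_{\sigma_n}c^{(n)}_{\sigma_n}T_n(\sigma_n|\bm{\sigma}^*)$ is a convex combination of numbers all $\ge c^{(n)}_{a_n}$ that places positive weight on some index where the value strictly exceeds $c^{(n)}_{a_n}$ — such an index exists and differs from $a_n$ precisely because $a_n$ is a minimizer and the coefficients are not all equal. Therefore $w_n(\bm{\sigma}^*)>0$ for each $n\in N''$, whereas $w_n(\bm{\sigma}^*)=0$ for $n\in N'\setminus N''$, so $\sum_{n\in N'}w_n(\bm{\sigma}^*)>0$; this contradicts $\sum_{n\in N'}\bm{w}_n=\mathbf{0}$, and the lemma follows.

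I expect the delicate point to be the choice of $a_n$ and the accompanying sign bookkeeping: one must take $a_n$ to be a \emph{minimizer} of the coefficient sequence, so that the coordinates on which $T_n(\cdot|\bm{\sigma}^*)$ is guaranteed positive (exactly the $\sigma_n\ne a_n$) are the ones that pull the convex combination strictly above $c^{(n)}_{a_n}$; an arbitrary or maximizing choice would not yield a uniform sign across players. This is also where the hypothesis is indispensable, matching the fact — illustrated by the dependent tit-for-tat example — that the conclusion can fail when strategy vectors have zero elements.
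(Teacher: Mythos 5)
Your proof is correct and follows essentially the same route as the paper's: you represent each vector through the columns of $\mathcal{T}_n$, pick for each player an action minimizing the coefficients, evaluate the vanishing sum at a single joint state $\bm{\sigma}^*$ whose $N'$-coordinates are those minimizers, and use the sign structure of $\tilde{T}_n$ together with the no-zero-element hypothesis to reach a contradiction. The only (harmless) differences are that you absorb the scalars $\lambda_n$ into the vectors up front and close the argument with a strict positivity estimate from the convex-combination structure, whereas the paper shows each summand is nonnegative, concludes all vanish, and then deduces $\bm{c}_n\propto\mathbf{1}_{M_n}$.
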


\begin{proof}
We assume to the contrary that $\{\bm{v}_n\}_{n\in N'}$ are linearly dependent.
Then there is a set of coefficients $\{a_n\}_{n\in N'}$ with which
$\sum_{n\in N'}a_n\bm{v}_n=\mathbf{0}_M$ holds.
Without loss of generality we assume $a_n\not=0$ for $n\in N'$.

Since $\bm{v}_n\in\spn\mathcal{T}_n$, it is expressed as
$\bm{v}_n=\mathcal{T}_n\bm{c}_n$ with a non-zero vector
$\bm{c}_n=(c_{n,1},\ldots,c_{n,M_n})^\mathsf{T}$. 
Let $\tilde{\sigma}_n\equiv \arg\min_{\sigma_n\in\{1,\ldots,M_n\}}\{a_nc_{n,\sigma_n}\}$,
where ties may be broken arbitrarily,
and $\tilde{c}_n\equiv c_{n,\tilde{\sigma}_n}$.
With Eq.~\eqref{eq:null}, one obtains
\begin{equation}
  \bm{v}_n=\mathcal{T}_n(\bm{c}_n-\tilde{c}_n\mathbf{1}_{M_n}),
\end{equation}
and thus
\begin{equation}
  a_nv_n(\bm{\sigma}')=\sum_{\sigma_n=1}^{M_n}a_n(c_{n,\sigma_n}-\tilde{c}_n)
  \tilde{T}_n(\sigma_n|\bm{\sigma}').
\end{equation}

We show that the inequality 
\begin{equation}
  \label{eq:l2ineq}
  a_n(c_{n,\sigma_n}-\tilde{c}_n)\tilde{T}_n(\sigma_n|\bm{\sigma}')\ge0
\end{equation}
holds for any $n$, any $\sigma_n\in\{1,\ldots,M_n\}$,
and any $\bm{\sigma}'\in\Sigma$ satisfying $\sigma_n'=\tilde{\sigma}_n$. 
We first note that for any strategy vector $\tilde{\bm{T}}_n(\sigma_n)$
with action $\sigma_n\in\{1,\cdots,M_n\}$,
one has, from Eq.~\eqref{eq:def_tTn},
\begin{equation}
  \label{eq:orthant}
  \tilde{T}_n(\sigma_n|\bm{\sigma^\prime})
  \left\{
  \begin{array}{ll}
    \le0, & \sigma_n'=\sigma_n,\\
    \ge0, & \sigma_n'\not=\sigma_n.
  \end{array}
  \right.
\end{equation}
Fix any $\bm{\sigma}'\in\Sigma$
satisfying $\sigma_n'=\tilde{\sigma}_n$ for a moment. 
Then, for $\sigma_n=\tilde{\sigma}_n$ one has $c_{n,\sigma_n}=\tilde{c}_n$
by definition, making the left-hand side of Eq.~\eqref{eq:l2ineq}
equal to zero.
For $\sigma_n\not=\tilde{\sigma}_n$, on the other hand, 
one has $a_n(c_{n,\sigma_n}-\tilde{c}_n)\ge0$ by definition. 
Also, since $\sigma_n'=\tilde{\sigma}_n\not=\sigma_n$, from Eq.~\eqref{eq:orthant} 
one has $\tilde{T}_n(\sigma_n|\bm{\sigma}')\ge0$.
These imply that the inequality~\eqref{eq:l2ineq} holds
for $\sigma_n\not=\tilde{\sigma}_n$.
Putting the above arguments together, we have shown
that the inequality~\eqref{eq:l2ineq} holds 
for any $n$, any $\sigma_n\in\{1,\ldots,M_n\}$,
and any $\bm{\sigma}'\in\Sigma$ satisfying $\sigma_n'=\tilde{\sigma}_n$.

Fix any $\bm{\sigma}'\in\Sigma$ satisfying $\sigma_n'=\tilde{\sigma}_n$
for all $n\in N'$. 
The above argument has shown that
the inequality~\eqref{eq:l2ineq}
holds for any $n$ and any $\sigma_n\in\{1,\ldots,M_n\}$. 
On the other hand, at the beginning of the proof we have assumed that
\begin{equation}
  \sum_{n\in N'}a_nv_n(\bm{\sigma}')
  =\sum_{n\in N'}\sum_{\sigma_n=1}^{M_n}
  a_n(c_{n,\sigma_n}-\tilde{c}_n)\tilde{T}_n(\sigma_n|\bm{\sigma}')
  =0
\end{equation}
holds, 
implying that the summand $a_n(c_{n,\sigma_n}-\tilde{c}_n)\tilde{T}_n(\sigma_n|\bm{\sigma}')$ is equal to zero 
for any $n\in N'$ and any $\sigma_n\in\{1,\ldots,M_n\}$. 
By assumption, $a_n\not=0$ and $\tilde{T}_n(\sigma_n|\bm{\sigma}')\not=0$,
so that one has $c_{n,\sigma_n}=\tilde{c}_n$, and consequently,
$\bm{v}_n=\tilde{c}_n\mathcal{T}_n\mathbf{1}_{M_n}=\mathbf{0}_M$,
leading to contradiction.
\end{proof}

The proof of Proposition~\ref{prop:Indep} is straightforward
by taking $\bm{v}_n$ as belonging to $\mathcal{S}$ in Lemma~\ref{lemma:indep}.

\subsection*{Proof of Proposition~\ref{prop:nosimZD_nonzeroPD}}
We first show the following lemma.
\begin{lemma}
\label{lemma:sym}
Consider an $N$-player game which is symmetric with respect to
a permutation $\pi$ on $\{1,\ldots,N\}$.
Assume that the column vectors of $\mathcal{S}$ are linearly independent. 
For any pair of players $n$ and $\bar{n}$ satisfying $n\not=\pi(\bar{n})$, 
if the strategy vectors of these players contain
no zero element, 
then it is impossible for these players to adopt 
ZD strategies with which player $n$ enforces linear relation 
$\bm{e}^{\mathsf{T}}\bm{\alpha}=0$ with $\bm{\alpha}\not=\mathbf{0}_{N+1}$, 
and where player $\bar{n}$ enforces $\bm{e}^{\mathsf{T}}\bm{\alpha}_\pi=0$, 
where $\bm{\alpha}_\pi\equiv(\alpha_0,\alpha_{\pi(1)},\ldots,\alpha_{\pi(N)})^{\mathsf{T}}$. 
\end{lemma}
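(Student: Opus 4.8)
The plan is to argue by contradiction. Suppose player $n$ adopts a ZD strategy $\mathcal{T}_n$ enforcing $\bm{e}^{\mathsf{T}}\bm{\alpha}=0$ with $\bm{\alpha}\neq\mathbf{0}_{N+1}$, and player $\bar n$ adopts a ZD strategy $\mathcal{T}_{\bar n}$ enforcing $\bm{e}^{\mathsf{T}}\bm{\alpha}_\pi=0$; I would show that the vector $\bm{u}\equiv\mathcal{S}\bm{\alpha}$ must then vanish, which is absurd. Indeed, by the ZD property $\bm{u}\in V_n\subseteq\spn\mathcal{T}_n$, and since the columns of $\mathcal{S}$ are linearly independent and $\bm{\alpha}\neq\mathbf{0}_{N+1}$ we have $\bm{u}\neq\mathbf{0}_M$; likewise $\mathcal{S}\bm{\alpha}_\pi\in V_{\bar n}\subseteq\spn\mathcal{T}_{\bar n}$.

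The first substantive step is to turn the symmetry of the game into a link between $\bm{u}$ and $\mathcal{S}\bm{\alpha}_\pi$. Let $\Pi$ be the $M\times M$ permutation matrix acting on coordinates by $(\Pi\bm{x})_{\bm{\sigma}}=x_{\bm{\sigma}_\pi}$; then $\Pi\mathbf{1}_M=\mathbf{1}_M$, and the identity $s_{\pi(m)}(\bm{\sigma})=s_m(\bm{\sigma}_\pi)$ of Definition~\ref{def:symmetric} reads $\bm{s}_{\pi(m)}=\Pi\bm{s}_m$ for every $m$. A short reindexing then yields $\mathcal{S}\bm{\alpha}_\pi=\Pi^{-1}\bm{u}$, hence $\bm{u}=\Pi(\mathcal{S}\bm{\alpha}_\pi)\in\Pi\,\spn\mathcal{T}_{\bar n}=\spn(\Pi\mathcal{T}_{\bar n})$. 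Thus $\bm{u}$ is a common non-zero vector of $\spn\mathcal{T}_n$ and of $\spn(\Pi\mathcal{T}_{\bar n})$.

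Next I would observe that $\Pi\mathcal{T}_{\bar n}$ is, structurally, again a zero-element-free strategy matrix, effectively one for player $\pi(\bar n)$: its columns sum to $\mathbf{0}_M$ since $\Pi\mathcal{T}_{\bar n}\mathbf{1}_{M_{\bar n}}=\Pi\mathbf{0}_M=\mathbf{0}_M$ (note $M_{\bar n}=M_{\pi(\bar n)}$ by symmetry); its $(\bm{\sigma}',\sigma_{\bar n})$ entry equals $\tilde{T}_{\bar n}(\sigma_{\bar n}\mid\bm{\sigma}'_\pi)$, which by Eq.~\eqref{eq:orthant} carries exactly the orthant sign pattern of a strategy matrix but governed by the coordinate $\pi(\bar n)$ of $\bm{\sigma}'$ rather than the coordinate $\bar n$; and it has no zero entry because the strategy vectors of player $\bar n$ have none. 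One can then apply the argument in the proof of Lemma~\ref{lemma:indep} to the pair of matrices $\mathcal{T}_n$ and $\Pi\mathcal{T}_{\bar n}$, whose sign-controlling coordinates $n$ and $\pi(\bar n)$ are distinct by hypothesis: writing $\bm{u}=\mathcal{T}_n\bm{c}=(\Pi\mathcal{T}_{\bar n})\bm{c}'$, let $\tilde\sigma$ attain $\min_{\sigma_n}c_{\sigma_n}$ and $\tilde\sigma'$ attain $\max_{\sigma_{\bar n}}c'_{\sigma_{\bar n}}$, and pick a state $\bm{\sigma}^*$ with $\sigma^*_n=\tilde\sigma$ and $\sigma^*_{\pi(\bar n)}=\tilde\sigma'$ (possible precisely because $n\neq\pi(\bar n)$). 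Evaluating $u_{\bm{\sigma}^*}$ through the expression $\bm{u}=\mathcal{T}_n\bm{c}$ displays it as a sum of non-negative terms, while evaluating it through $\bm{u}=(\Pi\mathcal{T}_{\bar n})\bm{c}'$ forces $u_{\bm{\sigma}^*}\leq0$; hence $u_{\bm{\sigma}^*}=0$, every term of the first sum vanishes, and since $\mathcal{T}_n$ has no zero entry, $\bm{c}$ is a multiple of $\mathbf{1}_{M_n}$, so $\bm{u}=\mathcal{T}_n\bm{c}=\mathbf{0}_M$ by Eq.~\eqref{eq:null}, which is the desired contradiction.

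I expect the only genuinely delicate point to be the symmetry bookkeeping in the middle step: fixing the direction of $\Pi$, verifying the identity $\mathcal{S}\bm{\alpha}_\pi=\Pi^{-1}\mathcal{S}\bm{\alpha}$, and checking that conjugating $\mathcal{T}_{\bar n}$ by $\Pi$ transports its orthant sign pattern from coordinate $\bar n$ to coordinate $\pi(\bar n)$. Everything after that is a transcription of the sign argument already established for Lemma~\ref{lemma:indep}, with the hypothesis $n\neq\pi(\bar n)$ playing exactly the role that ``distinct players act on distinct coordinates of $\bm{\sigma}'$'' played there.
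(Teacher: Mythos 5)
Your proof is correct and follows essentially the same route as the paper: your permutation matrix $\Pi$ with $\Pi\mathcal{T}_{\bar n}$ is exactly the paper's reindexed matrix $\mathcal{T}_{\bar n,\pi}$, your identity $\mathcal{S}\bm{\alpha}_\pi=\Pi^{-1}\mathcal{S}\bm{\alpha}$ is Eq.~\eqref{eq:symm}, and the min/max sign argument at a state fixing coordinates $n$ and $\pi(\bar n)$ (using $n\neq\pi(\bar n)$ and the no-zero-element hypothesis) is precisely the paper's concluding step. The only cosmetic difference is your choice of min for $\bm{c}$ and max for $\bm{c}'$ versus the paper's opposite convention, which changes nothing.
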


\begin{proof}
We assume to the contrary that there exists $\bm{\alpha}\not=\mathbf{0}_{N+1}$
satisfying the properties stated in Lemma~\ref{lemma:sym}. 
By assumption,
$\mathcal{S}\bm{\alpha}\in V_n=\spn\mathcal{T}_n\cap\spn\mathcal{S}$
and $\mathcal{S}\bm{\alpha}_\pi\in V_{\bar{n}}=\spn\mathcal{T}_{\bar{n}}\cap\spn\mathcal{S}$.
There then exist $\bm{c}_n$ and $\bar{\bm{c}}_{\bar{n}}$ satisfying
$\mathcal{T}_n\bm{c}_n=\mathcal{S}\bm{\alpha}$ 
and $\mathcal{T}_{\bar{n}}\bar{\bm{c}}_{\bar{n}}=\mathcal{S}\bm{\alpha}_\pi$.
One has
\begin{align}
  \label{eq:symm}
  (\mathcal{S}\bm{\alpha}_\pi)(\bm{\sigma}'_\pi)
  &=\alpha_0+\sum_{n=1}^N\alpha_{\pi(n)}s_n(\bm{\sigma}'_\pi)
  \nonumber\\
  &=\alpha_0+\sum_{n=1}^N\alpha_{\pi(n)}s_{\pi(n)}(\bm{\sigma}')
  =(\mathcal{S}\bm{\alpha})(\bm{\sigma}'),
\end{align}
where the second equality is due to the assumed symmetry of the game
with respect to $\pi$.
Letting $\tilde{T}_{\bar{n},\pi}(\sigma_{\bar{n}}|\bm{\sigma}')\equiv\tilde{T}_{\bar{n}}(\sigma_{\bar{n}}|\bm{\sigma}'_\pi)$,
$\tilde{\bm{T}}_{\bar{n},\pi}(\sigma_{\bar{n}})\equiv(\tilde{T}_{\bar{n},\pi}(\sigma_{\bar{n}}|\bm{\sigma}'))$,
and $\mathcal{T}_{\bar{n},\pi}\equiv(\tilde{\bm{T}}_{\bar{n},\pi}(1),\ldots,\tilde{\bm{T}}_{\bar{n},\pi}(M_{\bar{n}}))$,
one has
\begin{align}
  (\mathcal{T}_{\bar{n},\pi}\bar{\bm{c}}_{\bar{n}})(\bm{\sigma}')
  &=(\mathcal{T}_{\bar{n}}\bar{\bm{c}}_{\bar{n}})(\bm{\sigma}'_\pi)
  =(\mathcal{S}\bm{\alpha}_\pi)(\bm{\sigma}'_\pi)
  \nonumber\\
  &=(\mathcal{S}\bm{\alpha})(\bm{\sigma}')=(\mathcal{T}_n\bm{c}_n)(\bm{\sigma}'),
\end{align}
implying that $\mathcal{T}_{\bar{n},\pi}\bar{\bm{c}}_{\bar{n}}=\mathcal{T}_n\bm{c}_n$ holds.
Let $\bm{v}=\mathcal{T}_n\bm{c}_n=\mathcal{T}_{\bar{n},\pi}\bar{\bm{c}}_{\bar{n}}$. 

Let $\sigma_{n,\mathrm{max}}=\arg\max_{\sigma_n}c_{n,\sigma_n}$ and 
$\bar{\sigma}_{\bar{n},\mathrm{min}}=\arg\min_{\sigma_{\bar{n}}}\bar{c}_{\bar{n},\sigma_{\bar{n}}}$,
where ties may be broken arbitrarily,
and $c_{n,\mathrm{max}}=c_{n,\sigma_{n,\mathrm{max}}}$ and
$\bar{c}_{\bar{n},\mathrm{min}}=\bar{c}_{\bar{n},\bar{\sigma}_{n,\mathrm{min}}}$.
One then has
\begin{equation}
  \bm{v}=\mathcal{T}_n(\bm{c}_n-c_{n,\mathrm{max}}\mathbf{1}_{M_n})
  =\mathcal{T}_{\bar{n},\pi}(\bar{\bm{c}}_{\bar{n}}-\bar{c}_{\bar{n},\mathrm{min}}\mathbf{1}_{M_{\bar{n}}}).
\end{equation}
Recalling that we have assumed $n\not=\pi(\bar{n})$,
let $\bm{\sigma}'\in\Sigma$ be an arbitrary state
satisfying $\sigma'_n=\sigma_{n,\mathrm{max}}$
and $\sigma'_{\pi(\bar{n})}=\bar{\sigma}_{\bar{n},\mathrm{min}}$.
Then, in view of Eq.~\eqref{eq:orthant}, one has
\begin{align}
  v(\bm{\sigma}')
  &=\sum_{\sigma_n=1}^{M_n}(c_{n,\sigma_n}-c_{n,\mathrm{max}})\tilde{T}_n(\sigma_n|\bm{\sigma}')\le0,
  \nonumber\\
  &=\sum_{\sigma_{\bar{n}}=1}^{M_{\bar{n}}}(\bar{c}_{\bar{n},\sigma_{\bar{n}}}-\bar{c}_{\bar{n},\mathrm{min}})
  \tilde{T}_{\bar{n}}(\sigma_{\bar{n}}|\bm{\sigma}'_\pi)\ge0,
\end{align}
implying that $v(\bm{\sigma}')=0$ holds.
Since $(c_{n,\sigma_n}-c_{n,\mathrm{max}})\tilde{T}_n(\sigma_n|\bm{\sigma}')\le0$
for all $\sigma_n\in\{1,\ldots,M_n\}$, they are all equal to zero.
Since $\tilde{T}_n(\sigma_n|\bm{\sigma}')$ is assumed non-zero,
one has $c_{n,\sigma_n}=c_{n,\mathrm{max}}$ for all $\sigma_n\in\{1,\ldots,M_n\}$
and consequently $\bm{c}_n\propto\mathbf{1}_{M_n}$.
One similarly has $\bar{\bm{c}}_{\bar{n}}\propto\mathbf{1}_{M_{\bar{n}}}$.
Therefore, from Eq.~\eqref{eq:null} one has
$\mathcal{T}_n\bm{c}_n=\mathcal{T}_{\bar{n}}\bar{\bm{c}}_{\bar{n}}=\mathbf{0}_M$.
Due to the assumption of linear independence of the columns of $\mathcal{S}$,
it in turn implies that $\bm{\alpha}=\mathbf{0}_{N+1}$ holds,
leading to contradiction. 
\end{proof}

It should be noted that Lemma~\ref{lemma:sym} holds
even if one takes $\bar{n}=n$, in which case the Lemma implies
that, if the game is symmetric with respect to $\pi$, 
player $n$ with $\pi(n)\not=n$ cannot enforce
linear relations $\bm{e}^{\mathsf{T}}\bm{\alpha}=\bm{e}^{\mathsf{T}}\bm{\alpha}_\pi=0$ simultaneously. 
It should also be noted that Lemma~\ref{lemma:sym} furthermore implies
that it is impossible for that player to enforce
a linear relation $\bm{e}^{\mathsf{T}}\bm{\alpha}=0$
satisfying $\bm{\alpha}_\pi=\bm{\alpha}\not=\mathbf{0}_{N+1}$.
In other words, in a symmetric game no player to whom the game is symmetric
can enforce a linear relation with the same symmetry as the game itself.

Proposition~\ref{prop:nosimZD_nonzeroPD} is a direct consequence of Lemma~\ref{lemma:sym} in weakly symmetric multi-player games.

\subsection*{Proof of Proposition~\ref{prop:nosimZD_difpayoff}}
Without loss of generality, we assume that player $k$ takes an $N$-dimensional ZD strategy
determining the average payoffs $e_n$ for $n=1,\cdots, N$.
Due to the above discussion, only $e_1=\cdots=e_N=C$ is allowed.
Letting $\bm{\alpha}^{(n)}\equiv(-C,0,\cdots,\mathop{1}\limits_{\widehat n},\cdots,0)^{\mathsf{T}}$ for $n\in\{1,\ldots,N\}$,
one can take $\{\mathcal{S}\bm{\alpha}^{(n)}\}_{n\in\{1,\ldots,N\}}$ as a basis
of the $N$-dimensional ZD strategy.
Let $\bm{c}^{(n)}$ be defined as 
\begin{equation}
  \mathcal{T}_k\bm{c}^{(n)}=\mathcal{S}\bm{\alpha}^{(n)}
  =\bm{s}_n-C\mathbf{1}_M
  ,\quad n\in\{1,\ldots,N\}.
\end{equation}
By the assumption of weak symmetry,
for any player $n\not=k$, 
there exists a permutation $\pi$ satisfying $\pi(n)=k$ such that
the game is symmetric with respect to $\pi$.
Noting that $\bm{\alpha}^{(n)}_\pi=\bm{\alpha}^{(k)}$,
from Eq.~\eqref{eq:symm} one has
\begin{equation}
  \label{eq:simultaneous_sym}
  (\mathcal{T}_k\bm{c}^{(k)})(\bm{\sigma}'_\pi)
  =(\mathcal{T}_k\bm{c}^{(n)})(\bm{\sigma}').
\end{equation}

For $n\in\{1,\ldots,N\}$,
define $\sigma_{\mathrm{max}}^{(n)}\equiv\arg\max_{\sigma_k}c_{\sigma_k}^{(n)}$
and $\sigma_{\mathrm{min}}^{(n)}\equiv\arg\min_{\sigma_k}c_{\sigma_k}^{(n)}$,
where ties may be broken arbitrarily provided that
$\sigma_{\mathrm{max}}^{(n)}\not=\sigma_{\mathrm{min}}^{(n)}$ holds, 
and $c^{(n)}_{\mathrm{max}} = c^{(n)}_{\sigma_{\mathrm{max}}}$
and $c^{(n)}_{\mathrm{min}} = c^{(n)}_{\sigma_{\mathrm{min}}}$.
From Eq.~\eqref{eq:tt_norm}, one has
\begin{align}
  \mathcal{T}_k\bm{c}^{(n)}
  &=\mathcal{T}_k\bigl(\bm{c}^{(n)}-c_{\mathrm{max}}^{(n)}\mathbf{1}_{M_k}\bigr)
  \nonumber\\
  &=\mathcal{T}_k\bigl(\bm{c}^{(n)}-c_{\mathrm{min}}^{(n)}\mathbf{1}_{M_k}\bigr).
\end{align}
Then, from Eq.~\eqref{eq:simultaneous_sym} and Eq.~\eqref{eq:orthant}, we obtain for an arbitrary $\bm{\sigma}^*$ satisfying $\sigma^*_{k}=\sigma^{(n)}_{\mathrm{max}}$ and $\sigma^*_{n}=\sigma^{(k)}_{\mathrm{min}}$
\begin{align}
 s_n\left( \bm{\sigma}^* \right) - C
 &= \left( \mathcal{T}_k \bigl( \bm{c}^{(n)} - c^{(n)}_{\mathrm{max}} \bm{1}_{M_k} \bigr) \right) \left( \bm{\sigma}^* \right) \leq 0 \\
 &= \left( \mathcal{T}_k \bigl( \bm{c}^{(k)} - c^{(k)}_{\mathrm{min}} \bm{1}_{M_k} \bigr) \right) \left( \bm{\sigma}^*_{\pi} \right) \geq 0
\end{align}
implying $s_n\left( \bm{\sigma}^* \right) = C$.
On the other hand, we also obtain for an arbitrary $\bm{\sigma}^{**}$ satisfying $\sigma^{**}_{k}=\sigma^{(n)}_{\mathrm{min}}$ and $\sigma^{**}_{n}=\sigma^{(k)}_{\mathrm{max}}$
\begin{align}
 s_n\left( \bm{\sigma}^{**} \right) - C
 &= \left( \mathcal{T}_k \bigl( \bm{c}^{(n)} - c^{(n)}_{\mathrm{min}} \bm{1}_{M_k} \bigr) \right) \left( \bm{\sigma}^{**} \right) \geq 0 \\
 &= \left( \mathcal{T}_k \bigl( \bm{c}^{(k)} - c^{(k)}_{\mathrm{max}} \bm{1}_{M_k} \bigr) \right) \left( \bm{\sigma}^{**}_{\pi} \right) \leq 0
\end{align}
implying $s_n\left( \bm{\sigma}^{**} \right) = C$.
Then, because we have assumed that all elements of the payoff vector $\bm{s}_n$ are different from each other,
we have arrived at a contradiction.

\clearpage
\section*{Supporting information}
\label{S1_Text}
\setcounter{equation}{0}
\def\theequation{S\arabic{equation}}
\setcounter{figure}{0}
\def\thefigure{S\arabic{figure}}

\section{ZD strategy in zero-sum games}
\subsection{Absence of ZD strategies in the rock-paper-scissors game}
We consider the rock-paper-scissors game:
\begin{align}
 \bm{s}_1 &= \left( 0, 1, -1, -1, 0, 1, 1, -1, 0 \right)^\mathsf{T} \nonumber \\
 \bm{s}_2 &= \left( 0, -1, 1, 1, 0, -1, -1, 1, 0 \right)^\mathsf{T}.
 \label{eq:payoff_RPS}
\end{align}
It should be noted that this game is two-player three-action symmetric zero-sum game.
Because it is zero-sum game, the payoff vectors are linearly dependent $\bm{s}_2=-\bm{s}_1$, and
\begin{eqnarray}
 \left\langle s_2 \right\rangle_\mathrm{s} &=& - \left\langle s_1 \right\rangle_\mathrm{s}
\end{eqnarray}
always holds.
When player $1$ can employ ZD strategy, her strategy takes the form
\begin{eqnarray}
 \sum_{\sigma_1=1}^{3} c_{\sigma_1}^{(1)} \tilde{\bm{T}}_1(\sigma_1) &=& \alpha \bm{s}_1 + \gamma \bm{1}_9
 \label{eq:ZD_1}
\end{eqnarray}
and she enforces the linear relation
\begin{eqnarray}
 0 &=& \alpha \left\langle s_1 \right\rangle_\mathrm{s} + \gamma.
\end{eqnarray}
Since the game is symmetric, player $2$ can independently employ ZD strategy which enforces
\begin{eqnarray}
 0 &=& \alpha \left\langle s_2 \right\rangle_\mathrm{s} + \gamma.
\end{eqnarray}
Then, because of the consistency of ZD strategies, $\gamma=0$ must hold.
On the other hand, by using $\gamma=0$, Eq.~\eqref{eq:ZD_1} can be written as
\begin{eqnarray}
 \alpha s_1\left( \sigma_1^\prime, \sigma_2^\prime \right) &=& \sum_{\sigma_1=1}^{3} \left( c_{\sigma_1}^{(1)} - c_\mathrm{max}^{(1)} \right) \tilde{T}_1\left(\sigma_1 | \sigma_1^\prime, \sigma_2^\prime \right) \qquad \left( \forall \sigma_1^\prime, \forall \sigma_2^\prime \right) \\
 &=& \sum_{\sigma_1=1}^{3} \left( c_{\sigma_1}^{(1)} - c_\mathrm{min}^{(1)} \right) \tilde{T}_1\left(\sigma_1 | \sigma_1^\prime, \sigma_2^\prime \right) \qquad \left( \forall \sigma_1^\prime, \forall \sigma_2^\prime \right),
\end{eqnarray}
where we have defined
\begin{eqnarray}
 c_\mathrm{max}^{(1)} &\equiv& \max_{\sigma_1} c_{\sigma_1}^{(1)} \\
 c_\mathrm{min}^{(1)} &\equiv& \min_{\sigma_1} c_{\sigma_1}^{(1)} \\
 \sigma_{1, \mathrm{max}} &\equiv& \arg\max_{\sigma_1} c_{\sigma_1}^{(1)} \\
 \sigma_{1, \mathrm{min}} &\equiv& \arg\min_{\sigma_1} c_{\sigma_1}^{(1)}
\end{eqnarray}
Then we find that
\begin{eqnarray}
 \alpha s_1\left( \sigma_{1, \mathrm{max}}, \sigma_2^\prime \right) &=& \sum_{\sigma_1=1}^{3} \left( c_{\sigma_1}^{(1)} - c_\mathrm{max}^{(1)} \right) \tilde{T}_1\left(\sigma_1 | \sigma_{1, \mathrm{max}}, \sigma_2^\prime \right) \leq 0 \qquad \left( \forall \sigma_2^\prime \right)
 \label{eq:payoff_1_max}
\end{eqnarray}
and
\begin{eqnarray}
 \alpha s_1\left( \sigma_{1, \mathrm{min}}, \sigma_2^\prime \right) &=& \sum_{\sigma_1=1}^{3} \left( c_{\sigma_1}^{(1)} - c_\mathrm{min}^{(1)} \right) \tilde{T}_1\left(\sigma_1 | \sigma_{1, \mathrm{min}}, \sigma_2^\prime \right) \geq 0 \qquad \left( \forall \sigma_2^\prime \right).
 \label{eq:payoff_1_min}
\end{eqnarray}
Because $\sigma_{1, \mathrm{max}}\in \left\{ 1, 2, 3 \right\}$ and $\sigma_{1, \mathrm{min}}\in \left\{ 1, 2, 3 \right\}$, Eq.~\eqref{eq:payoff_1_max} and Eq.~\eqref{eq:payoff_1_min} are inconsistent with the definition of the payoff~\eqref{eq:payoff_RPS}.
Therefore, we conclude that ZD strategy does not exist in the rock-paper-scissors game.

\subsection{Example of ZD strategy in two-player three-action symmetric zero-sum game}
We next consider the following two-player three-action symmetric zero-sum game, which is the slightly modified version of the game in the main text:
\begin{eqnarray}
 \bm{s}_1 &=& \left( 0, r, 0, -r, 0, 0, 0, 0, 0 \right)^\mathsf{T} \nonumber \\
 \bm{s}_2 &=& \left( 0, -r, 0, r, 0, 0, 0, 0, 0 \right)^\mathsf{T}.
\end{eqnarray}
We remark that $\bm{s}_1$ and $\bm{s}_2$ are linearly dependent $\bm{s}_2=-\bm{s}_1$.
We choose strategies of player $1$ as
\begin{eqnarray}
 \bm{T}_1(1) &=& \left( 1, 1-p, 1, p^\prime, 0, 0, 0, 0, 0 \right)^\mathsf{T} \nonumber \\
 \bm{T}_1(2) &=& \left( 0, q, 0, 1-q^\prime, 1, 1, 0, 0, 0 \right)^\mathsf{T} \nonumber \\
 \bm{T}_1(3) &=& \left( 0, p-q, 0, q^\prime-p^\prime, 0, 0, 1, 1, 1 \right)^\mathsf{T}
\end{eqnarray}
with $0\leq p\leq 1$, $0\leq q\leq 1$, $0\leq p^\prime \leq 1$, $0\leq q^\prime \leq 1$, $q\leq p$, and $p^\prime \leq q^\prime$.
Then we obtain
\begin{eqnarray}
 r\frac{q^\prime - q}{p^\prime q - pq^\prime} \bm{\tilde{T}}_1(1) + r\frac{p^\prime - p}{p^\prime q - pq^\prime} \bm{\tilde{T}}_1(2) &=& \bm{s}_1.
\end{eqnarray}
Therefore, this strategy is ZD strategy which control the payoffs of both players as $\left\langle s_1 \right\rangle_\mathrm{s}=\left\langle s_2 \right\rangle_\mathrm{s}=0$.
It should be noted that Eq.~\eqref{eq:payoff_1_max} and Eq.~\eqref{eq:payoff_1_min} are satisfied for this game.

\section{ZD strategy in game with public monitoring}
\subsection{Two-player two-action game}
As an example of ZD strategy for a repeated imperfect-monitoring game,
we consider a two-player two-action symmetric game~\cite{Kob2018}.
We assume $\tau\in \left\{ 1, 2 \right\}$ and the probability $W\left( \tau | \bm{\sigma}' \right)$ is given by
\begin{eqnarray}
 W\left( 1 | 1,1 \right) &=& \frac{1}{2}, \\
 W\left( 1 | 1,2 \right) &=& w, \\
 W\left( 1 | 2,1 \right) &=& 1-w, \\
 W\left( 1 | 2,2 \right) &=& \frac{1}{2}.
\end{eqnarray}
This model is different from the noisy games studied by Hao et al.~\cite{HRZ2015},
in that they consider $\tau$ as noisy states, taking four values $\tau\in \left\{ gg, gb, bg, bb \right\}$ corresponding to the four states in the iterated prisoner's dilemma game, whereas ours considers $\tau$ as taking only two values,
representing winning/losing of player 1.
The payoff vectors are given by $\textrm{\boldmath $s$}_1 = \left( R, S, T, P \right)^\mathsf{T}$ and $\textrm{\boldmath $s$}_2 = \left( R, T, S, P \right)^\mathsf{T}$.
We consider equalizer strategy for player $1$:
\begin{eqnarray}
 \tilde{\bm{T}}_1(1) &=& \beta \textrm{\boldmath $s$}_2 + \gamma \textrm{\boldmath $1$}_4.
 \label{eq:equalizer}
\end{eqnarray}
This strategy unilaterally sets the average payoff of player $2$ in the steady state:
\begin{eqnarray}
 \left\langle s_2 \right\rangle_\mathrm{s} &=& - \frac{\gamma}{\beta}.
\end{eqnarray}
By solving Eq. (\ref{eq:equalizer}) with respect to $\hat{T}_n\left( \sigma_n | \sigma^\prime_n, \tau \right)$, we obtain
\begin{eqnarray}
 \hat{T}_1\left( 1 | 1, 1 \right) &=& \frac{2(1-w)R-T}{1-2w}\beta + \gamma + 1, \\
 \hat{T}_1\left( 1 | 1, 2 \right) &=& \frac{T - 2wR}{1-2w}\beta + \gamma + 1, \\
 \hat{T}_1\left( 1 | 2, 1 \right) &=& \frac{S - 2wP}{1-2w}\beta + \gamma, \\
 \hat{T}_1\left( 1 | 2, 2 \right) &=& \frac{2(1-w)P - S}{1-2w}\beta + \gamma.
\end{eqnarray}

Concretely, we consider $w=1/5$ and $\left( R, S, T, P \right) = \left( 4, 1, 9/2, 3/2 \right)$.
By setting $\beta = -3/125$ and $\gamma=33/500$, we obtain 
\begin{eqnarray}
 \hat{T}_1\left( 1 | 1, 1 \right) &=& \frac{99}{100}, \\
 \hat{T}_1\left( 1 | 1, 2 \right) &=& \frac{95}{100}, \\
 \hat{T}_1\left( 1 | 2, 1 \right) &=& \frac{5}{100}, \\
 \hat{T}_1\left( 1 | 2, 2 \right) &=& \frac{1}{100},
\end{eqnarray}
and
\begin{eqnarray}
 \left\langle s_2 \right\rangle_\mathrm{s} &=& \frac{11}{4}.
 \label{eq:avs2_example}
\end{eqnarray}
In Fig.~\ref{fig:ZDS_eq}, we display the result of numerical simulation of one sample.
\begin{figure}[tbp]
\includegraphics[clip, width=8.0cm]{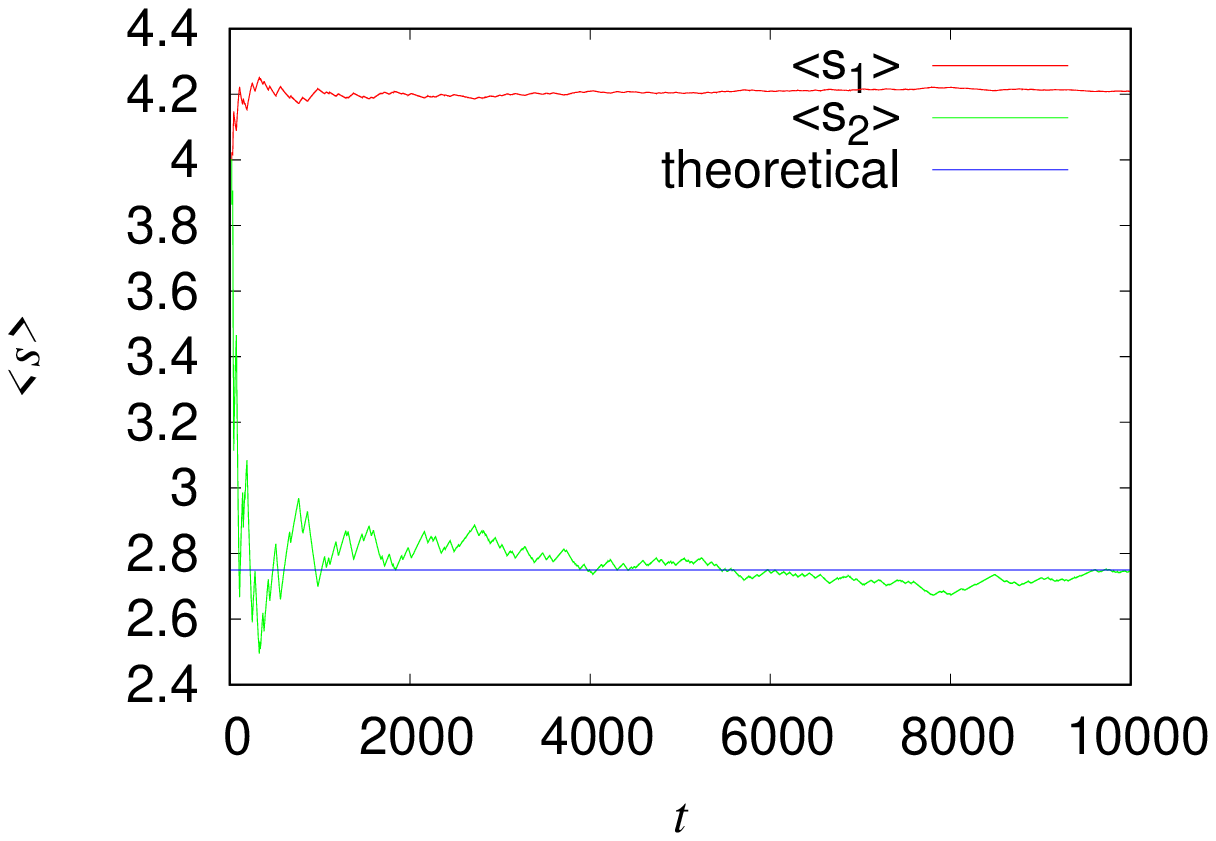}
\caption{Time-averaged payoffs of two players $\sum_{t^\prime=1}^t s_n\left( \sigma_1(t^\prime), \sigma_2(t^\prime) \right)/t$. The solid line corresponds to the theoretical prediction Eq. (\ref{eq:avs2_example}) for player $2$'s expected payoff.}
\label{fig:ZDS_eq}
\end{figure}
Time-averaged payoffs of two players $\sum_{t^\prime=1}^t s_n\left( \sigma_1(t^\prime), \sigma_2(t^\prime) \right)/t$ are displayed when the strategy of player $2$ is all-$1$:
\begin{eqnarray}
 \hat{T}_2\left( 1 | 1, 1 \right) &=& 1 \\
 \hat{T}_2\left( 1 | 1, 2 \right) &=& 1 \\
 \hat{T}_2\left( 1 | 2, 1 \right) &=& 1 \\
 \hat{T}_2\left( 1 | 2, 2 \right) &=& 1.
\end{eqnarray}
The initial condition is set to $\sigma_1(0)=1$ and $\sigma_2(0)=1$.
The numerical result for player $2$ well matches with the theoretical prediction Eq. (\ref{eq:avs2_example}).
We can see that the expected payoff of player $2$ is unilaterally controlled by the ZD strategy of player $1$.

\subsection{Two-player three-action game}
We consider the same two-player three-action symmetric game in the main text:
\begin{eqnarray}
 \bm{s}_1 &=& \left( 0, r_1, 0, r_2, 0, 0, 0, 0, 0 \right)^\mathsf{T} \nonumber \\
 \bm{s}_2 &=& \left( 0, r_2, 0, r_1, 0, 0, 0, 0, 0 \right)^\mathsf{T}.
\end{eqnarray}
However, we assume that players can observe only common information $\tau\in \left\{ \mathrm{y}, \mathrm{n} \right\}$ and the probability $W\left( \tau | \bm{\sigma}' \right)$ is given by
\begin{eqnarray}
 W\left( \mathrm{y} | 1,1 \right) &=& 0 \\
 W\left( \mathrm{y} | 1,2 \right) &=& w \\
 W\left( \mathrm{y} | 1,3 \right) &=& 0 \\
 W\left( \mathrm{y} | 2,1 \right) &=& w \\
 W\left( \mathrm{y} | 2,2 \right) &=& 0 \\
 W\left( \mathrm{y} | 2,3 \right) &=& 0 \\
 W\left( \mathrm{y} | 3,1 \right) &=& 0 \\
 W\left( \mathrm{y} | 3,2 \right) &=& 0 \\
 W\left( \mathrm{y} | 3,3 \right) &=& 0.
\end{eqnarray}
The common information $\tau$ represents whether payoffs of both players are non-zero or not.
We consider the situation that player $1$ employs the following strategy:
\begin{eqnarray}
 \hat{T}_1\left( 1 | 1, \mathrm{y} \right) &=& \frac{w-p}{w} \\
 \hat{T}_1\left( 1 | 1, \mathrm{n} \right) &=& 1 \\
 \hat{T}_1\left( 1 | 2, \mathrm{y} \right) &=& \frac{p^\prime}{w} \\
 \hat{T}_1\left( 1 | 2, \mathrm{n} \right) &=& 0 \\
 \hat{T}_1\left( 1 | 3, \mathrm{y} \right) &=& 0 \\
 \hat{T}_1\left( 1 | 3, \mathrm{n} \right) &=& 0 \\
 \hat{T}_1\left( 2 | 1, \mathrm{y} \right) &=& \frac{q}{w} \\
 \hat{T}_1\left( 2 | 1, \mathrm{n} \right) &=& 0 \\
 \hat{T}_1\left( 2 | 2, \mathrm{y} \right) &=& \frac{w-q^\prime}{w} \\
 \hat{T}_1\left( 2 | 2, \mathrm{n} \right) &=& 1 \\
 \hat{T}_1\left( 2 | 3, \mathrm{y} \right) &=& 0 \\
 \hat{T}_1\left( 2 | 3, \mathrm{n} \right) &=& 0 \\
 \hat{T}_1\left( 3 | 1, \mathrm{y} \right) &=& \frac{p-q}{w} \\
 \hat{T}_1\left( 3 | 1, \mathrm{n} \right) &=& 0 \\
 \hat{T}_1\left( 3 | 2, \mathrm{y} \right) &=& \frac{q^\prime - p^\prime}{w} \\
 \hat{T}_1\left( 3 | 2, \mathrm{n} \right) &=& 0 \\
 \hat{T}_1\left( 3 | 3, \mathrm{y} \right) &=& 1 \\
 \hat{T}_1\left( 3 | 3, \mathrm{n} \right) &=& 1
\end{eqnarray}
with $0\leq p\leq w$, $0\leq q\leq w$, $0\leq p^\prime \leq w$, $0\leq q^\prime \leq w$, $q\leq p$, and $p^\prime \leq q^\prime$.
Then, from the definition
\begin{eqnarray}
 T_1\left( \sigma_1 | \bm{\sigma}' \right) \equiv \sum_{\tau=\mathrm{y},\mathrm{n}} W\left( \tau | \bm{\sigma}' \right) \hat{T}_1\left( \sigma_1 | \sigma^\prime_1, \tau \right),
\end{eqnarray}
we obtain
\begin{eqnarray}
 \bm{T}_1(1) &=& \left( 1, 1-p, 1, p^\prime, 0, 0, 0, 0, 0 \right)^\mathsf{T} \nonumber \\
 \bm{T}_1(2) &=& \left( 0, q, 0, 1-q^\prime, 1, 1, 0, 0, 0 \right)^\mathsf{T} \nonumber \\
 \bm{T}_1(3) &=& \left( 0, p-q, 0, q^\prime-p^\prime, 0, 0, 1, 1, 1 \right)^\mathsf{T}.
\end{eqnarray}
These strategy vectors are the same as those in the main text, and give
\begin{eqnarray}
 \frac{q^\prime r_1 + qr_2}{p^\prime q - pq^\prime} \bm{\tilde{T}}_1(1) + \frac{p^\prime r_1 + pr_2}{p^\prime q - pq^\prime} \bm{\tilde{T}}_1(2) &= &\bm{s}_1 \\
 \frac{q^\prime r_2 + qr_1}{p^\prime q - pq^\prime} \bm{\tilde{T}}_1(1) + \frac{p^\prime r_2 + pr_1}{p^\prime q - pq^\prime} \bm{\tilde{T}}_1(2) &= &\bm{s}_2,
\end{eqnarray}
which enforce the linear relations $\left\langle s_1 \right\rangle_\mathrm{s}=0$ and $\left\langle s_2 \right\rangle_\mathrm{s}=0$.
Therefore, player $1$ can enforce the same linear relations as those in the perfect monitoring case, even though players can know only $\tau$.
This means that the space of states $\Sigma$ is successfully reduced to the smaller space $\left\{ \mathrm{y}, \mathrm{n} \right\}$ in terms of ZD strategies.
The only difference is that the possible region of parameters $p$, $q$, $p^\prime$, $q^\prime$ is smaller for $w\neq 1$ than that in perfect monitoring case.

In Fig.~\ref{fig:ZDS_two_three}, we display the result of numerical simulation of one sample for $r_1=2.0$ and $r_2=1.0$.
\begin{figure}[tbp]
\includegraphics[clip, width=8.0cm]{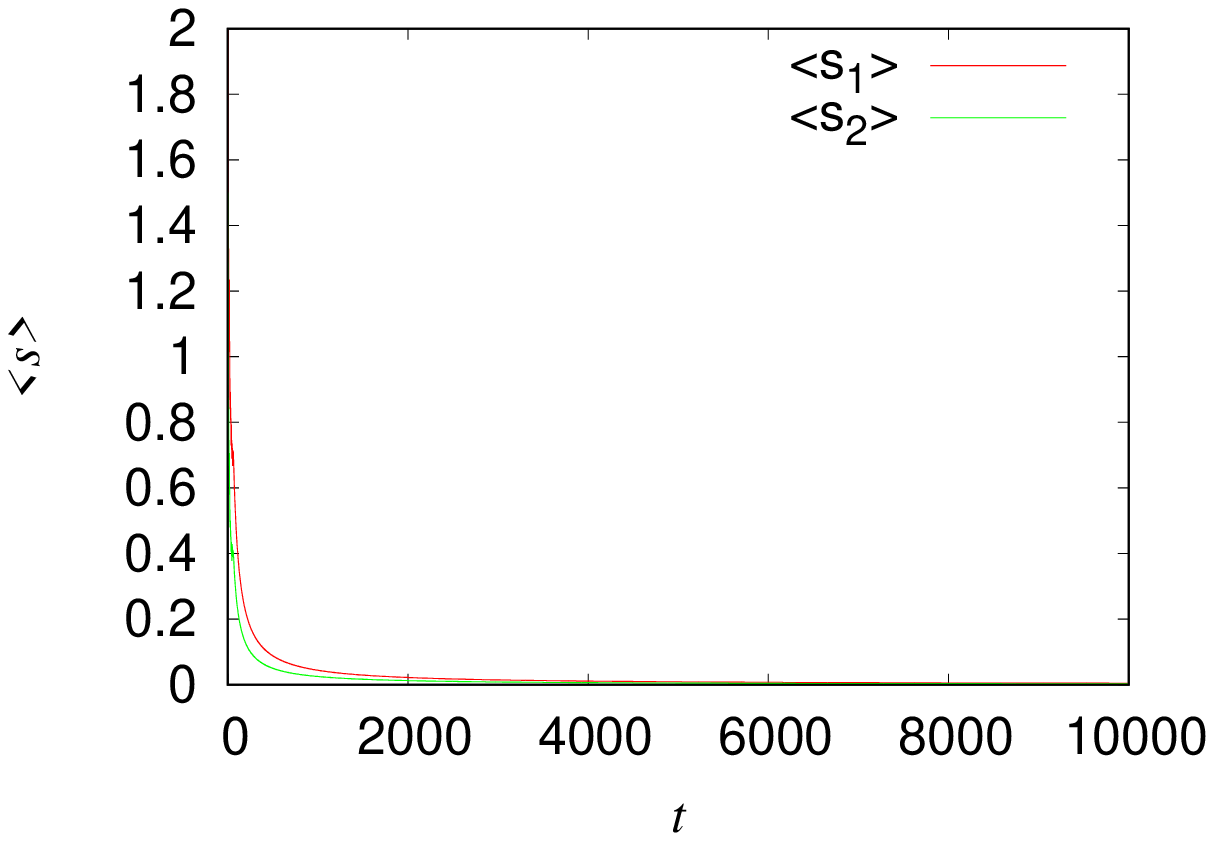}
\caption{Time-averaged payoffs of two players $\sum_{t^\prime=1}^t s_n\left( \sigma_1(t^\prime), \sigma_2(t^\prime) \right)/t$.}
\label{fig:ZDS_two_three}
\end{figure}
Time-averaged payoffs of two players $\sum_{t^\prime=1}^t s_n\left( \sigma_1(t^\prime), \sigma_2(t^\prime) \right)/t$ are displayed when $w=0.9$, $p=0.2$, $q=0.1$, $p^\prime=0.25$, $q^\prime=0.3$ and the strategy of player $2$ is
\begin{eqnarray}
 \hat{T}_2\left( \sigma_2 | \sigma_2^\prime, \tau \right) &=& \frac{1}{3} \qquad \left( \forall \sigma_2, \forall \sigma_2^\prime, \forall \tau \right).
\end{eqnarray}
The initial condition is given by the probability distribution $P(\sigma)=1/3$ for both players.
The numerical result is consistent with the theoretical prediction $\left\langle s_1 \right\rangle_\mathrm{s}=\left\langle s_2 \right\rangle_\mathrm{s}=0$.

\section*{Acknowledgments}
We thank Ryosuke Kobayashi for valuable discussions. This study was supported by JSPS KAKENHI Grant Numbers JP18H06476 and JP19K21542.

\nolinenumbers

%
%
%
\bibliography{zds}

\end{document}